\documentclass[11pt]{article}
\usepackage[letterpaper, margin=0.75in]{geometry}

\usepackage{graphicx,color}
\usepackage[cp850]{inputenc}
\usepackage[T1]{fontenc}
\usepackage{rotating}
\usepackage[f]{esvect}
\usepackage{amsmath, amsthm, amssymb}
\usepackage{rotating}
\usepackage{lscape}
\usepackage{mathrsfs}
\usepackage{subcaption}
\usepackage{mathtools}
\usepackage[toc,page]{appendix}
\usepackage{hyphenat}
\usepackage{mathptmx}
\usepackage[scaled=.65]{helvet}
\usepackage{courier}
\usepackage[toc,page]{appendix}
\usepackage{authblk}
\usepackage[right, mathlines]{lineno}
\usepackage{enumitem}

\usepackage[normalem]{ulem}

\setlength{\parskip}{1.1ex}
\setlength{\parindent}{0.5cm}

\def\tablenotes{\bgroup\parfillskip=0pt plus 1fil
\leftskip=0pt\relax \rightskip=0pt
\vskip2pt\footnotesize}
\def\endtablenotes{\vskip1pt\egroup}

\newtheorem{theorem}{Theorem}[section]
\newtheorem{proposition}[theorem]{Proposition}
\newtheorem{corollary}[theorem]{Corollary}

\newtheorem{lemma}[theorem]{Lemma}
\newtheorem{remark}[theorem]{Remark}

\captionsetup{skip=0pt}


\renewcommand{\epsilon}{\varepsilon}
\renewcommand{\leq}{\leqslant}
\renewcommand{\geq}{\geqslant}
\renewcommand{\d}{\mathrm{d}}
 
\newcommand{\phat}{\hat{p}}

\renewcommand{\epsilon}{\varepsilon}

\newcommand{\TimeDeriv}{\frac{\textrm{d}}{\textrm{dt}}}

\allowdisplaybreaks

\usepackage[square,numbers,sort]{natbib}

\bibliographystyle{apalike}

\usepackage{graphicx}
\usepackage{overpic}
\usepackage{tikz}
 \pagenumbering{arabic}
\pagestyle{myheadings}

 \graphicspath{ {./Figures/} }

\title{Reducing phenotype-structured partial differential equations models of cancer evolution to systems of ordinary differential equations: a generalised moment dynamics approach}

	\author[1,*]{Chiara Villa}
	\author[2]{Philip K Maini}
    \author[2]{Alexander P Browning}
    \author[3]{Adrianne L Jenner}
    \author[4,5]{Sara Hamis}
    \author[6,*]{Tyler Cassidy}

  \affil[1]{Sorbonne Universit\'e, CNRS, Universit\'e de Paris, Inria, Laboratoire Jacques-Louis Lions UMR 7598, 75005 Paris, France}
    \affil[2]{{Universit\'e Paris-Saclay, Inria, Centre Inria de Saclay, 91120, Palaiseau, France}}
    \affil[3]{Mathematical Institute, University of Oxford, Oxford, United Kingdom}
    \affil[4]{School of Mathematical Sciences, Queensland University of Technology, Brisbane, Australia}
    \affil[5]{Department of Information Technology, Uppsala University, Uppsala, Sweden}
	\affil[6]{Faculty of Medicine and Health Technology, Tampere University, Tampere, Finland}
    \affil[7]{School of Mathematics, University of Leeds, Leeds, United Kingdom}
    \affil[*]{Correspondence to: t.cassidy1@leeds.ac.uk (TC), chiara.villa@inria.fr(CV)}
\date{\today}

\begin{document}

\maketitle
 
 \noindent{\bf Author ORCiDs}: 
\smallskip
{\bf CV:} 0000-0003-3127-0532; {\bf PM:} 0000-0002-0146-9164;  {\bf APB:} 0000-0002-8753-1538; {\bf ALJ:} 0000-0001-9103-7092; \textbf{SH:} 0000-0002-1105-8078;   {\bf TC:} 0000-0003-0757-0017.

\section*{Abstract}
Intratumour phenotypic heterogeneity is understood to play a critical role in disease progression and treatment failure. Accordingly, there has been increasing interest in the development of mathematical models capable of capturing its role in cancer cell adaptation. 
This can be systematically achieved by means of models comprising phenotype-structured nonlocal partial differential equations, tracking the evolution of the phenotypic density distribution of the cell population, which may be compared to gene and protein expression distributions obtained experimentally. Nevertheless, given the high analytical and computational cost of solving these models, much is to be gained from reducing them to systems of ordinary differential equations for the moments of the distribution. We propose a generalised method of model-reduction, relying on the use of a moment generating function, Taylor series expansion and truncation closure, to reduce a nonlocal reaction-advection-diffusion equation, with general phenotypic drift and proliferation rate functions, to a system of moment equations up to arbitrary order. Our method extends previous results in the literature, which we address via two examples, by removing any \textit{a priori} assumption on the shape of the distribution, and provides a flexible framework for mathematical modellers to account for the role of phenotypic heterogeneity in cancer adaptive dynamics, in a simpler mathematical framework.

\clearpage

\section{Introduction}

Intratumour heterogeneity is increasingly understood as a primary determinant of disease progression and therapeutic response in solid cancers \citep{McGranahan2017,Burkhardt2022,Marine2020}. 
 While this heterogeneity has long-been viewed through the lens of clonal differences, recent experimental and clinical studies have implicated non-genetic heterogeneity as a driver of drug resistance and treatment failure \citep{Bell2020,Hanahan2022,Labrie2022}. The Epithelial-Mesenchymal Transition (EMT) is a well-studied example of non-genetic resistance \citep{Shi2023,Hanahan2022}, although a multitude of other examples exist, including adaptive rewiring of the mitogen activated protein kinase pathway \citep{Labrie2022}, and drug tolerant persisters in non-small cell lung cancer~\citep{Sharma2010,Goldman2015}. 
Alongside its role in the development of drug-resistance, non-genetic plasticity is at the core of metabolic and morphological changes in cancer cells that facilite their  metastatic spread and survival in harsh environments~\citep{mosier2021cancer,shen2020cell,tasdogan2020metabolic}.

Recent studies indicate that epigenetic regulation of genetically identical cancer cells induces a reversible drug-tolerant phenotype that expands during anticancer therapy \citep{Kavran2022,Shaffer2017}. In particular, transcriptomic data have identified reversible phenotypic changes that drive the development of resistance to targeted anti-cancer therapies and are mediated by a number of complex physiological factors \citep{Kavran2022,Shaffer2017}. Indeed, recent advances in multi-omics techniques have illustrated the complex dynamics of gene and protein expression that drive phenotypic plasticity \citep{Chen2023,Tirosh2016}. 
This ability to characterise population-level phenotypic plasticity permits a deeper understanding of evolution of non-genetic intratumour heterogeneity and the population distribution in phenotype space. Consequently, these experimental advances facilitate the development of mathematical models designed to capture both the shape and evolution of the phenotypic distribution of cancer cells.

Accordingly, there has been increased interest in the development of mathematical models to characterise the role of phenotypic heterogeneity in drug resistance and tumour progression~\citep{clairambault2019survey,marusyk2020intratumor}. A variety of deterministic and stochastic modelling frameworks have been proposed to study the evolutionary dynamics of phenotype-structured populations ~\citep{Cassidy2021,Gunnarsson2020,Anderson2006,stace2020discrete}.
Many existing models of phenotypic plasticity have focused on characterising the dynamics of a fixed and finite number of phenotypes, with transitions between these discrete states corresponding to an evolutionary game~\citep{Kareva2022,Craig2019,Kaznatcheev2019,West2018}. 
This discrete-phenotype framework is particularly common in the study of cancer treatment and relies on the assumption of the existence of drug-sensitive and drug-resistant subpopulations~\citep{Smalley2019,Craig2019,Cassidy2019}. 
However, as the role of continuously increasing levels of drug resistance has become apparent in driving treatment response, there has been increased interest in understanding the \textit{adaptive dynamics} that drive short-term phenotypic adaptation in response to, for example, the application of chemotherapeutic drugs. 
Moreover, the relevance of capturing phenotypic variants on a continuum extends beyond the study of the development of drug resistance, as phenotypic changes in cells are mediated by variations in the level of expression of relevant genes and proteins, which are indeed measured on a continuum. 

Dieckmann and Law~\citep{Dieckmann1996} proposed an adaptive dynamics framework to explicitly capture the dynamics of the continuous adaptation of the \textit{mean} phenotypic state, with population level heterogeneity captured by means of stochastic fluctuations in phenotype space, specifically focusing on `mutation-selection' dynamics which are easily transferable to the study of cancer~\citep{Aguade-Gorgorio2018,Martinez2021}. 
Moreover, their derivation of an ODE system for the dynamics of the mean evolutionary path inspired many deterministic studies, generally more amenable to analytical investigations~\citep{Gunnarsson2020,Altrock2015}, of adaptive dynamics relying on the simplifying assumption of a monomorphic population~\citep{Coggan2022,Martinez2021,vincent2005evolutionary}. 
These models often comprise a system of ordinary differential equations (ODEs) that model the evolution of both the population size and the mean trait $m_1(t)$. 
Nonetheless, as increasing importance is being attributed to population-level heterogeneity, we focus on deterministic frameworks providing a mean-field macroscopic description of stochastic, individual-cell dynamics, that do not rely on the limiting assumption of a monomorphic population. 

The resulting modelling approach typically describes the time-evolution of the entire phenotypic density function, denoted $p(t,x)$. The dynamics of the population, which is continuously structured by the variable $x$ in phenotype space, are modelled by a nonlocal partial differential equation (PDE) which typically takes the form of a reaction-advection-diffusion equation. 
Equations of this type may be studied with the theory of semigroups of operators, fixed point theorems in Banach spaces, and numerical methods, with semi-classical asymptotic methods having become increasingly popular to study certain limit cases ~\citep{chisholm2016effects,Dieckmann1996,dull2021spaces,perthame2006transport,perthame2008dirac}. 

This framework explicitly captures the continuous phenotypic adaptation of the population while preserving information on population-level heterogeneity, and is becoming increasingly common as experimental advances allow for direct characterisation of cancer cell phenotypes. For example, \citet{almeida2024evolutionary} and \citet{celora2022dna} leveraged time-resolved flow cytometry experiments to inform a structured PDE model of adaptation to nutrient or oxygen deprivation. 
Notably, in these works the phenotypic state $x$ is interpreted as the level of expression of a certain gene or protein, although this need not be the case \citep{Chisholm2015,Cho2018}. 

Although phenotype-structured PDEs carry increased biological relevance in the context of heterogeneous tumours, they pose a set of challenges that do not apply to standard ODE modelling frameworks.
For example, it is possible to establish the existence of equilibrium solutions of these PDE models via fixed point approaches in an appropriate Banach space. However, this fixed-point approach may not be constructive and is more mathematically involved than calculating equilibrium solutions of ODE models, which typically only involves solving a possibly non-linear set of equations. In addition, numerical methods for PDEs are typically implemented on a case-by-case basis, while highly efficient and accurate solvers for ODEs are found in most software packages. As parameter estimation typically involves many model simulations, the increased numerical efficiency of solving ODE is magnified when calibrating these models against experimental data. 
Multi-omics approaches that characterise the phenotypic distribution of tumours are becoming increasingly common. 
Nevertheless, experimental machines providing a full characterisation of gene and protein expression distributions (e.g. via flow cytometry or mass spectronomy) are yet to be widely available in experimental facilities due to their elevated cost, and more accessible techniques (e.g. Western blotting or RNA-seq) may only describe lower order moments -- such as the mean and variance -- of the phenotype distribution. 
Here, we
develop a technique to reduce phenotype-structured PDEs  to a system of ODEs for the moments characterising the phenotypic density function $p(t,x)$.
 This reduction will allow modellers to use existing technical tools for ODE models, such as those for identifiability analysis, sensitivity analysis, and model parameterisation, while maintaining the biological relevance and interpretability of the phenotype-structured PDE. A similar approach has been applied in mathematical oncology~\citep{almeida2024evolutionary,ardavseva2020evolutionary,lorenzi2015dissecting,villa2021evolutionary}, generally building on the model reduction procedure developed by \citet{almeida2019evolution} and \citet{chisholm2016evolutionary}. There, the authors showed that if the initial phenotypic density distribution, with $x\in\mathbb{R}$, is normally distributed with mean $m_1(0)$ and variance $\sigma^2(0)$, then under further restrictions on the population net-proliferation rate and phenotypic drift rates, it is possible to obtain explicit ODEs for $m_1(t)$ and $\sigma^2(t)$. However, while protein expression distributions may be approximately normal in some cases~\citep{almeida2024evolutionary}, the assumption that the phenotypic trait $x$ is unbounded and possibly negative is not, in general, compatible with biological data. 
Moreover, the additional restrictions on the functional forms of terms relating to proliferation rate and phenotypic drift in these studies reduce the model applicability to a limited range of biological scenarios. 

Here, we propose a generalised method to reduce phenotype-structured PDEs modelling cell adaptive dynamics to a system of ODEs for the moments characterising the phenotypic density function $p(t,x)$. 
Our method allows us to extend the analysis presented in~\citep{almeida2019evolution,chisholm2016evolutionary,lorenzi2015dissecting} by:
\begin{itemize}[noitemsep,topsep=0pt]
    \item[(i)] relaxing the assumption of an unbounded phenotype space, thus working in a more biologically relevant phenotypic domain;
    \item[(ii)] removing all \textit{a priori} assumptions on the shape of the distribution; 
    \item[(iii)] removing the additional restrictions on the phenotypic drift and net proliferation rate terms.
\end{itemize}
The model reduction procedure relies on the use of the moment generating function of the phenotypic distribution and techniques such as Taylor series expansion and moment closure that have previously been used in the stochastic modelling literature~\cite{Engblom:2006,Kuehn:2016uf,Fan:2016,Schnoerr.2017wbb,wagner2022quasi}. We thus obtain a system of ODEs for the moments characterising the phenotypic density function up to an arbitrary order.  The remainder of the manuscript is structured as follows. After introducing the general phenotype-structured reaction-advection-diffusion equation in Section~\ref{Sec:GeneralModel}, we demonstrate the model reduction procedure in Section~\ref{Sec:Analysis}, compare results with several examples from the extant literature in Section~\ref{Sec:Examples}, before concluding with a discussion in Section~\ref{Sec:Discussion}.

\section{A general phenotype-structured PDE model of cell adaptive dynamics} \label{Sec:GeneralModel}

Let $p(t,x)$ denote the phenotypic density function of the population at time $t$, i.e. the density of cells in the phenotypic state $x\in\Omega\subset\mathbb{R}$ at time $t\in\mathbb{R}_{\geq0}$, with $\Omega:=[{l},L]$ ($0<{l}<L$) a compact and connected set. The population size at time $t$, $P(t)$, is obtained by integrating over all possible phenotypes and is given by
\begin{equation}\label{Eq:PopulationSize}
P(t) = \int_\Omega p(t,x)\d x .
\end{equation}
We assume that $p(t,x)$ satisfies the following PDE
\begin{equation} \label{Eq:StructuredPDE}
\partial_t p(t,x) - \beta\partial^2_{xx} p(t,x) + \partial_x \big[\,V(t,x)p(t,x)\,\big] = \left( f(t,x)- \frac{P(t)}{\kappa} \right) p(t,x),
\end{equation}
for $t>0$ and $x\in\Omega$. Eq.~\eqref{Eq:StructuredPDE} is complemented with no flux boundary conditions 
\begin{equation}\label{Eq:BCs}
\beta \partial_x p + V(t,x)p = 0 \quad \textrm{for} \quad x \in\partial\Omega , 
\end{equation}
where we denote the boundary of $\Omega$ by $\partial\Omega$, and the initial condition
\begin{align}\label{Eq:ICs}
p(0,x) = p^0(x)\geq0, \quad \text{with} \quad \int_\Omega p^0(x) \d x>0,
\end{align}
where $p^0(x)$ denotes the phenotypic density function at time zero.

The second term on the left-hand-side of Eq.~\eqref{Eq:StructuredPDE} models spontaneous phenotypic changes as a diffusive flux~\citep{chisholm2016evolutionary,chisholm2016cell,cho2018modeling} with constant diffusion coefficient $\beta\geq0$.  The third term on the left-hand-side of Eq.~\eqref{Eq:StructuredPDE} models environment-driven phenotypic changes by an advection term~\citep{almeida2024evolutionary,celora2021phenotypic} with velocity $V(t,x)$, the time-dependency of which is likely mediated by some environmental factor  denoted by $c(t)\geq0$, i.e. $V(t,x)\equiv V(c(t),x)$. 
The reaction term on the right-hand-side of Eq.~\eqref{Eq:StructuredPDE} models phenotype-dependent cell proliferation and death as in the non-local Lotka-Volterra equation~\citep{ perthame2008dirac}. The phenotype-dependent intrinsic growth rate $f(t,x)$ is likely mediated by some environmental factor $c(t)$, i.e. $f(t,x)\equiv f(c(t),x)$, while the rate of death due to competition for space depends on the population size $P(t)$, defined in~\eqref{Eq:PopulationSize}, and the constant carrying capacity coefficient $\kappa>0$. 

We assume that the functions $f(t,x)$ and $V(t,x)$ are continuous in $x$ at each point in time, i.e.
\begin{equation}\label{ass:fv:continuous}
    f(t,\cdot) \in {C}^0(\Omega) \quad \text{and} \quad V(t,\cdot) \in {C}^0(\Omega) \quad \forall\, t\in\mathbb{R}_{\geq0}\,,
\end{equation}
and bounded in $t$ for each phenotypic state, i.e.
\begin{equation}\label{ass:fv:bounded}
    f_m \leq f(\cdot , x)  \leq f_M \quad \text{and} \quad V_m \leq V(\cdot , x)  \leq V_M \quad \forall x\in\Omega\subset \mathbb{R}\,.
\end{equation}

Models comprising PDEs in the form of Eq.~\eqref{Eq:StructuredPDE} can be formally derived from stochastic individual based models in the continuum, deterministic limit, see for instance~\citep{champagnat2002canonical,champagnat2006unifying,chisholm2016evolutionary,stace2020discrete} and references therein. In particular, the diffusion and drift terms emerge as the macroscopic deterministic description of a biased random walk \citep{chisholm2016evolutionary,lorenzi2020discrete,stace2020discrete}. In the stochastic and statistical literature, the left-hand-side of Eq.~\eqref{Eq:StructuredPDE} is usually thought of as a Fokker-Planck equation (or equivalently, a Kolmogorov forward equation) \citep{Kadanoff.2000}, which arises as the governing equation for the probability density function of a set of non-interacting particles undergoing a biased diffusion process in $x$.

\begin{remark}\label{Remark:SpatialInterpretation}
    The phenotypic state $x$ of a cell can be interpreted directly as the cellular level of expression of some gene and/or protein which mediates the observable characteristics and behaviour of the cell, relevant to the specific problem of interest. Due to natural biological constraints, gene and protein expression levels live in a bounded domain, as already clarified in the Introduction, motivating the choice of $\Omega:=[l,L]\subset\mathbb{R}$. The value of $l$ and $L$, i.e. the lowest and highest gene/protein expression levels realistically admissible, should be carefully selected by the modeller and inferred from biological data. In practice, these bounds encompass the entirety of the observable data. Consequently, gene/protein expression levels outside $\Omega$ are expected to be biologically infeasible and can be neglected.
\end{remark}

\section{Reduction to a system of ODEs characterising the phenotypic distribution}\label{Sec:Analysis}

The structured PDE~\eqref{Eq:StructuredPDE} captures the dynamics of the density of cells in phenotype-space. However, the density of cells with a given phenotype is unlikely to be the object of experimental or clinical interest. Rather, the evolution of the population size and distribution in phenotype space is relevant for understanding phenotypic adaptation. Consequently, we now generate a system of ODEs to characterise the population distribution in phenotype space. We begin by considering the size of the total population, $P(t).$

\subsection{The total cell population}

The population size $P(t)$ only depends on time due to the integration over phenotype space. We can therefore derive an integro-differential equation for the population size $P(t)$. 

\begin{lemma}\label{lemma:m0}
Let $p(t,x)$ satisfy Eq.~\eqref{Eq:StructuredPDE}, with $f$ satisfying assumption~\eqref{ass:fv:bounded}, along with boundary conditions~\eqref{Eq:BCs}, initial conditions~\eqref{Eq:ICs} and definition~\eqref{Eq:PopulationSize}. The population size $P(t)$ evolves according to the integro-differential equation
\begin{align}\label{Eq:PopulationSizeODE}
 \TimeDeriv  P(t) =  \int_\Omega f(t,x) p(t,x) \d x - \frac{P^2(t)}{\kappa} ,
\end{align}
complemented with the initial condition
\begin{equation}\label{ic:m0}
    P(0) = \int_\Omega p^0(x) \d x>0.
\end{equation}
Moreover, under the assumption in Eq~\eqref{ass:fv:continuous}, we have that
\begin{equation}\label{m0:bound}
    0< P(t)\leq \overline{P}<\infty \quad \forall t\geq0.
\end{equation}
\end{lemma}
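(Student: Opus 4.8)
The plan is to derive the integro-differential equation by integrating the PDE over $\Omega$ and using the boundary conditions, and then to establish the bound by a comparison argument. First I would integrate Eq.~\eqref{Eq:StructuredPDE} in $x$ over $\Omega = [l,L]$. Assuming enough regularity to exchange $\partial_t$ with the integral, the left-hand side contributes $\TimeDeriv P(t)$ from the first term, while the diffusion and advection terms combine into the total flux $\big[\beta\partial_x p + V(t,x)p\big]$ evaluated at the endpoints (after integrating the diffusion term by parts once and recognising the advection term as an exact derivative). By the no-flux boundary conditions~\eqref{Eq:BCs}, this boundary contribution vanishes. On the right-hand side, the term $-P(t)/\kappa$ is constant in $x$, so integrating $\big(f(t,x) - P(t)/\kappa\big)p(t,x)$ yields $\int_\Omega f(t,x)p(t,x)\,\d x - P^2(t)/\kappa$, giving Eq.~\eqref{Eq:PopulationSizeODE}. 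The initial condition~\eqref{ic:m0} is immediate from~\eqref{Eq:ICs} and~\eqref{Eq:PopulationSize}.

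Next I would establish positivity. Since $p^0 \geq 0$ and $\int_\Omega p^0 > 0$, a maximum-principle / parabolic-comparison argument for the PDE~\eqref{Eq:StructuredPDE} (with no-flux boundary conditions) gives $p(t,x) > 0$ for $t > 0$, hence $P(t) > 0$ for all $t \geq 0$; alternatively, one can argue directly at the level of the ODE for $P$ by noting that $\TimeDeriv P \geq -\big(\|f\|_\infty + P/\kappa\big)P$ cannot drive an initially positive $P$ to zero in finite time. For this one uses that, under assumption~\eqref{ass:fv:continuous}, $f(\cdot,x)$ is continuous on the compact set $\Omega$ at each time, so $\overline{f} := \sup_{t\geq0,\,x\in\Omega} f(t,x)$ is finite (this requires some uniform-in-time control on $f$, which I would state explicitly or take as implicit in the modelling assumptions).

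For the upper bound, I would estimate $\int_\Omega f(t,x)p(t,x)\,\d x \leq \overline{f}\,P(t)$, so that Eq.~\eqref{Eq:PopulationSizeODE} yields the differential inequality
\begin{equation*}
\TimeDeriv P(t) \leq \overline{f}\,P(t) - \frac{P^2(t)}{\kappa} = P(t)\left(\overline{f} - \frac{P(t)}{\kappa}\right).
\end{equation*}
This is a logistic inequality; by comparison with the solution of the corresponding logistic ODE, $P(t)$ is bounded above by $\max\{P(0),\,\kappa\overline{f}\}=:\overline{P} < \infty$ for all $t \geq 0$, establishing~\eqref{m0:bound}.

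The main obstacle is not the computation but the regularity bookkeeping: justifying differentiation under the integral sign and the integration by parts requires $p$ to be sufficiently smooth and the flux to be well defined up to the boundary, and the finiteness of $\overline{f}$ needs a uniform-in-time bound on $f$ that goes slightly beyond the stated pointwise continuity~\eqref{ass:fv:continuous}. I would handle this either by invoking standard parabolic regularity theory for~\eqref{Eq:StructuredPDE} or by explicitly adding the mild hypothesis that $f$ is bounded on $\mathbb{R}_{\geq0}\times\Omega$, which is natural given the biological interpretation and Remark~\ref{Remark:SpatialInterpretation}.
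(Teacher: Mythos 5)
Your proposal is correct and follows essentially the same route as the paper: integrate the PDE over $\Omega$, use the no-flux boundary conditions to eliminate the flux term, then bound $\int_\Omega f\,p\,\d x$ by $f_{\mathrm M}P(t)$ (resp. below by $f_{\mathrm m}P(t)$) and compare with logistic differential inequalities to get $\overline{P}=\max\{P(0),\kappa f_{\mathrm M}\}$ and strict positivity via Gronwall, which is exactly your second, ODE-level positivity argument. Your remark that the uniform-in-time bound on $f$ goes slightly beyond the pointwise continuity assumption~\eqref{ass:fv:continuous} is a fair observation — the paper's invocation of the Boundedness Theorem "for all $t\geq0$" implicitly assumes the same uniform control you would make explicit.
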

The proof follows standard calculations, cf. Appendix~\ref{app:proof:prop}.

\subsection{The moment generating function}
\label{sec:mgf}

In Lemma~\ref{lemma:m0}, we derived an integro-differential equation for the population size $P(t)$. However, this integro-differential equation explicitly depends on the phenotypic density function. Rather than studying this explicitly, we instead characterise $p(t,x)$ by recasting it as a probability distribution in phenotype space and studying the moments of this distribution. 
Hence, in what follows, we consider the phenotypic density function scaled by the total population size
\begin{equation}\label{def:phat}
    \phat(t,x) = \frac{p(t,x)}{P(t)}.
\end{equation}
The distribution $\phat(t,x)$ encodes a time-dependent probability measure $\mu(t)$ over phenotype space. This measure $\mu(t)$ has a Radon-Nikoydym derivative with respect to the Lebesgue measure $\lambda$ given by the distribution $\phat(t,x)$, i.e.
\begin{align*}
\frac{ \d \mu}{\d \lambda} =  \phat(t,x). 
\end{align*} This measure $\mu(t)$ and the population size $P(t)$ is sufficient to describe the phenotype-structured population $p(t,x)$. In what follows, we develop a system of ODEs to characterise the moments of the distribution $\phat(t,x)$; \citet{Curto2023} performed a similar analysis for the heat equation. We consider the moment generating function of the distribution $\phat(t,x)$, given by
\begin{equation}\label{def:mgf}
    M(s,t) = \int_{\Omega}  e^{sx} \phat(t,x) \d x. 
\end{equation}
We see from this definition that $M(0,t) = 1$ due to the scaling of $\phat(t,x)$ by the total population size at all times $t$. The higher moments of $\phat(t,x)$, where $m_k(t)$ denotes the $k$-th moment, are given by
\begin{equation}\label{def:hom}
m_k(t) = \partial_s^{k} M(s,t)|_{s=0} \qquad  k\geq1.
\end{equation}
Similar to  \citet{Curto2023}, these higher moments are explicitly time dependent. As $\Omega$ is compact, it follows from the Stone-Weierstrass theorem and the solution of the Hausdorff Moment Problem that the sequence of moments, $\{m_k(t)\}_{k=1}^{\infty}$, uniquely determines the distribution $\phat(t,x)$. Indeed, if $\Omega$ is not compact, as in some of our examples, then the mapping between moments and distribution is more subtle. 

Using the definition~\eqref{def:phat}, the ODE for the evolution of the population size $P(t)$ in Eq.~\eqref{Eq:PopulationSizeODE} becomes
\begin{align}\label{Eq:PopulationSizeODE2}
 \TimeDeriv  P(t) =  \left(\int_\Omega f(t,x) \phat(t,x) \d x - \frac{P(t)}{\kappa} \right) P(t).
\end{align}
We note that $P(t)$ thus satisfies a generalized logistic equation with growth rate and carrying capacity dependent on the phenotypic distribution $\phat(t,x)$. Thus, we now focus on the evolution of $\phat(t,x)$. 

\subsection{A system of integro-differential equations for the moments of $\phat(t,x)$}

\begin{proposition}\label{Prop1} 
Let $p(t,x)$ satisfy Eq.~\eqref{Eq:StructuredPDE}, along with boundary conditions~\eqref{Eq:BCs}, initial conditions~\eqref{Eq:ICs} and definition~\eqref{Eq:PopulationSize}. 
Then, the 0-th moment of $\phat(t,x)$ defined in~\eqref{def:phat} 
is $m_0(t) =1$ for all $t \geq 0$ and, under assumption~\eqref{ass:fv:continuous}, the moments $m_k(t)$ ($k\in\mathbb{N}$, $k\geq1$) satisfy the following system of integro-differential equations
\begin{equation} \label{Eq:GenericMomentIDE}
\left. 
\begin{aligned}
\TimeDeriv m_1(t)  = & \; \left[ -\beta \left[ \phat(t,x) \right]\rvert_{\partial\Omega} +  \int_{\Omega} V(t,x) \phat(t,x) \d x   + \int_{\Omega} x f(t,x) \phat(t,x) \d x  -   m_1(t) \int_{\Omega} f(t,x) \phat(t,x)  \d x \right], \\
\TimeDeriv m_k(t)  = & \;\beta k(k-1) m_{k-2}(t) - \beta n\left[ x^{k-1} e^{xs}\phat(t,x) \right]\rvert_{\partial\Omega} +  k \int_{\Omega} x^{k-1} V(x,c)\phat(t,x) \d x \\
&  + \int_{\Omega} x^{k} f(x,c) \phat(t,x) \d x - m_k(t)   \int_{\Omega} f(x,c)\phat(t,x) \d x \qquad k\geq2,
\end{aligned}
\right \}
\end{equation}
complemented with initial conditions
\begin{equation}\label{Eq:GenericMomentIDE:ICs}
    m_k(0)= \left( \frac{1}{\int_{\Omega} p^0(s) \d s} \right) \int_\Omega x^k p^0(x) \d x \quad (k\geq1) 
\end{equation}
and the identity $m_0(t) = 1$ for all $t \geq 0.$
\end{proposition}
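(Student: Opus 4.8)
The plan is to derive all the moment equations at once from a single evolution equation for the moment generating function $M(s,t)$ of \eqref{def:mgf}, in the spirit of Section~\ref{sec:mgf}. Two of the claims are immediate: $m_0(t)=M(0,t)=\int_\Omega\phat(t,x)\,\d x=P(t)^{-1}\int_\Omega p(t,x)\,\d x=1$ by \eqref{def:hom}, \eqref{def:mgf}, \eqref{def:phat} and \eqref{Eq:PopulationSize}; and evaluating $m_k(t)=\int_\Omega x^k\phat(t,x)\,\d x$ at $t=0$, together with \eqref{Eq:ICs} and \eqref{ic:m0}, yields the initial data \eqref{Eq:GenericMomentIDE:ICs}. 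It therefore remains to establish the evolution equations.

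I would first work with the \emph{unnormalised} generating function $\widetilde M(s,t):=\int_\Omega e^{sx}p(t,x)\,\d x = P(t)M(s,t)$, since $p$ --- rather than $\phat$ --- satisfies the PDE \eqref{Eq:StructuredPDE}, whose right-hand side is linear in $p$. Differentiating $\widetilde M$ under the integral sign in $t$, substituting \eqref{Eq:StructuredPDE} for $\partial_t p$, and integrating by parts (twice in the diffusive term $\beta\int_\Omega e^{sx}\partial^2_{xx}p\,\d x$, once in the advective term $-\int_\Omega e^{sx}\partial_x[Vp]\,\d x$) produces the bulk contributions $\beta s^2\widetilde M$, $\,s\int_\Omega e^{sx}Vp\,\d x$ and $\int_\Omega e^{sx}(f-P/\kappa)p\,\d x$, together with boundary terms that organise as $\big[e^{sx}(\beta\partial_x p-Vp)\big]\rvert_{\partial\Omega}-\beta s\,\big[e^{sx}p\big]\rvert_{\partial\Omega}$. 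It is essential to keep the diffusive and advective boundary contributions grouped in this way, so that the no-flux condition \eqref{Eq:BCs} annihilates the first bracket and only $-\beta s\,[e^{sx}p]\rvert_{\partial\Omega}$ remains. Dividing through by $P(t)$, applying the quotient rule, and substituting $\TimeDeriv P$ from \eqref{Eq:PopulationSizeODE2} (which follows from Lemma~\ref{lemma:m0}) makes the two $P/\kappa$ terms cancel, leaving the closed scalar PDE
\[
\partial_t M(s,t)=\beta s^2 M(s,t)-\beta s\,\big[e^{sx}\phat(t,x)\big]\rvert_{\partial\Omega}+s\!\int_\Omega e^{sx}V(t,x)\phat(t,x)\,\d x+\int_\Omega e^{sx}f(t,x)\phat(t,x)\,\d x-M(s,t)\!\int_\Omega f(t,x)\phat(t,x)\,\d x .
\]

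It then remains to apply $\partial_s^k$ to this identity and evaluate at $s=0$, using \eqref{def:hom}. By the general Leibniz rule, $\partial_s^k(s^2 M)\rvert_{s=0}=k(k-1)\,m_{k-2}(t)$, which is absent for $k\in\{0,1\}$; $\partial_s^k(s\,g(s))\rvert_{s=0}=k\,g^{(k-1)}(0)$, applied with $g(s)=[e^{sx}\phat]\rvert_{\partial\Omega}$ and with $g(s)=\int_\Omega e^{sx}V\phat\,\d x$, gives $-\beta k[x^{k-1}\phat]\rvert_{\partial\Omega}$ and $k\int_\Omega x^{k-1}V\phat\,\d x$; and differentiating the remaining two terms reproduces $\int_\Omega x^k f\phat\,\d x$ and $-m_k(t)\int_\Omega f\phat\,\d x$. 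This gives \eqref{Eq:GenericMomentIDE} for $k\geq2$; the case $k=1$ must be listed separately only because the coefficient $k(k-1)$ would otherwise multiply the undefined $m_{-1}$ --- a term that drops out for $k=1$. (Equivalently, one could multiply \eqref{Eq:StructuredPDE} by $x^k$ and integrate directly; the generating-function route merely handles all orders at once and stays within the framework of Section~\ref{sec:mgf}.) The one point demanding care --- more a matter of rigour than a real obstacle --- is justifying the interchange of $\partial_t$ with $\int_\Omega$ and the repeated integration by parts: for a classical solution of \eqref{Eq:StructuredPDE} this is routine, and for a weak solution one instead tests the weak formulation against $e^{sx}$, after which the remainder is pure bookkeeping.
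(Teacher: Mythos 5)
Your proposal is correct and follows essentially the same route as the paper: derive the evolution equation for the moment generating function (your closed PDE for $M(s,t)$ is exactly Eq.~\eqref{Eq:MGFTimeDerivative}, obtained there by writing $\partial_t[M P]$ rather than your quotient-rule phrasing with $\widetilde M = PM$), then differentiate $k$ times in $s$ and set $s=0$, treating $k=1$ separately. Your handling of the boundary terms under the no-flux condition and the Leibniz-rule evaluations match the paper's Steps 2--5, so this is the same proof in slightly different bookkeeping.
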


\begin{proof}
The proof of Proposition~\ref{Prop1} relies on the use of the moment generating function  of the distribution, introduced in Eq.~\eqref{def:mgf} of Section~\ref{sec:mgf}, to derive the higher order moments. 

\underline{\textit{Step 1: 0-th moment.}}
It follows immediately from the definition of $\phat(t,x)$ in Eq.~\eqref{def:phat} and the moment generating function in Eq.~\eqref{def:mgf} that $m_0(t) = 1$ for all time. 

\underline{\textit{Step 2: {evolution of the moment generating function.}}} 
We multiply \eqref{Eq:StructuredPDE} by $e^{sx}$ and integrate with respect to $x$ to find 
\begin{equation*}
    \partial_t \left[ \int_{\Omega} e^{sx} p(t,x) \d x \right] - \int_{\Omega} \left( e^{sx} \partial_x \left[ \beta \partial_x p(t,x) - V(t,x)p(t,x)\right] \right) \d x = \int_{\Omega} \left( f(t,x)- \frac{P(t)}{\kappa} \right) e^{sx} p(t,x) \d x . 
\end{equation*}

Multiplying the first term by unity, i.e. by $P(t)/P(t)$, and using definition~\eqref{def:mgf} yields
\begin{equation*}
    \partial_t \left[ M(s,t)P(t) \right] -\int_{\Omega} \left( e^{sx} \partial_x \left[ \beta \partial_x p(t,x) - V(t,x)p(t,x)\right] \right) \d x = \int_{\Omega} \left( f(t,x)- \frac{P(t)}{\kappa} \right) e^{sx} p(t,x) \d x .
\end{equation*}
The second term on the left-hand-side can be integrated by parts twice and, after imposing boundary condition~\eqref{Eq:BCs}, this gives
\begin{equation*}
\int_{\Omega}   e^{sx}  \partial_x \left[ \beta \partial_x p(t,x) - V(t,x)p(t,x)\right]  \d x  = -s \beta { \left[ e^{xs}p(t,x) \right]\rvert_{\partial\Omega} }
+ \beta s^2 M(s,t) P(t)  + s \int_{\Omega} e^{sx} V(t,x) p(t,x) \d x. 
\end{equation*}
Altogether this gives
\begin{align*}
 P(t) \partial_t M(s,t)  +  M(s,t) \TimeDeriv P(t)  =  & -s \beta  { \left[ e^{xs}p(t,x) \right]\rvert_{\partial\Omega} } + \beta s^2 M(s,t) P(t) + s \int_{\Omega} e^{sx} V(x,c)p(t,x) \d x \\
 &   + \int_{\Omega} f(x,c)e^{sx} p(t,x) \d x   - \frac{P(t)}{\kappa}  M(s,t)m_0(t).
\end{align*}
Substituting \eqref{Eq:PopulationSizeODE} and diving by $P(t) > 0$, which is non-zero as proved in Lemma~\ref{lemma:m0}, we find
\begin{equation} \label{Eq:MGFTimeDerivative}
\left. 
\begin{aligned}
\partial_t M(s,t)   = & - s \beta  { \left[ e^{xs}\phat(t,x) \right]\rvert_{\partial\Omega} } + \beta s^2 M(s,t) + s \int_{\Omega} e^{sx} V(t,x)\phat(t,x) \d x \\  
 &     + \int_{\Omega} f(t,x)e^{sx} \phat(t,x) \d x   -  M(s,t) \int_{\Omega} f(t,x) \phat(t,x)  \d x ,
\end{aligned}
\right \}
\end{equation}
where we used definition~\eqref{def:phat}. 

\underline{\textit{Step 3: the first moment.}} 
Differentiating \eqref{Eq:MGFTimeDerivative} once with respect to $s$ gives
\begin{align*}
\partial_t[\partial_s M(s,t)]  = &  -  \beta  \big( { \left[ e^{xs}\phat(t,x) \right]\rvert_{\partial\Omega} } 
  + s  { \left[ x e^{xs}\phat(t,x) \right]\rvert_{\partial\Omega} } \big) + \beta  \left( 2s M(s,t) + s^2 \partial_s M(s,t)\right) {+ \int_{\Omega} e^{sx} V(t,x)\phat(t,x) \d x} \\
& {+ s \int_{\Omega} x e^{sx} V(t,x)\phat(t,x) \d x} + \int_{\Omega} x f(t,x)e^{sx} \phat(t,x) \d x   -  [\partial_s M(s,t)] \int_{\Omega} f(t,x) \phat(t,x)  \d x .
\end{align*}
which, after setting $s =0$, immediately gives 
\begin{align*}
\TimeDeriv m_1(t) = - \beta \left[ \phat(t,x) \right]\rvert_{\partial\Omega} {+ \int_{\Omega} V(t,x)\phat(t,x) \d x}
 + \int_{\Omega} x f(t,x) \phat(t,x) \d x  -  m_1(t) \int_{\Omega} f(t,x) \phat(t,x)  \d x,
\end{align*}
which is the first equation in Eq.~\eqref{Eq:GenericMomentIDE}. This is complemented with the initial condition 
\begin{equation*}
    m_1(0)= \int_\Omega x \phat(0,x) \d x = \frac{1}{P(0)}\int_\Omega x p^0(x) \d x ,
\end{equation*}
obtained from the definition~\eqref{def:hom} and initial condition~\eqref{Eq:ICs}. 

\underline{\textit{Step 4: the $k$-th moment.}} For $k \geq 2$, we calculate the $k$-th derivatives with respect to $s$ of the terms on the right-hand-side of \eqref{Eq:MGFTimeDerivative} to find{, by induction,} the following:
\begin{align*}
\partial^k_s \big[  s \beta   \left[ e^{xs}\phat(t,x) \right]\rvert_{\partial\Omega} \big] &= \beta \big( k{ \left[ x^{k-1} e^{xs}\phat(t,x) \right]\rvert_{\partial\Omega} } 
  + s  { \left[ x^k e^{xs}\phat(t,x) \right]\rvert_{\partial\Omega} } \big),  \\
 \partial^k_s \left[  \beta s^2 M(s,t) \right]  &= \beta k(k-1) \partial_s^{k-2} M(s,t) +\beta \left( 2k s\partial_s^{k-1} M(s,t)  +  s^2 \partial_s^{k} M(s,t) \right) ,  \\
\partial^k_s \left[  s \int_{\Omega} e^{sx} V(x,c)\phat(t,x) \d x \right] &= \int_{\Omega} \left( kx^{k-1} + s  x^{k} \right) e^{sx} V(x,c)\phat(t,x) \d x , \\
\partial^k_s \left[  \int_{\Omega} e^{sx} f(x,c)\phat(t,x) \d x \right] &= \int_{\Omega} x^{k} e^{sx} f(x,c)\phat(t,x) \d x.
\end{align*}
Then, differentiating~\eqref{Eq:MGFTimeDerivative} $k$ times and setting $s=0$, one retrieves, for $k \geq 2$, 
 \begin{equation} \label{Eq:HigherOrderMomentODE}
 \left. 
 \begin{aligned}
 \TimeDeriv m_k(t)  =  &  - \beta k \left[ x^{k-1} e^{xs}\phat(t,x) \right]\rvert_{\partial\Omega} + \beta  k(k-1) m_{k-2}(t) +  \int_{\Omega} nx^{k-1} V(x,c)\phat(t,x) \d x  \\
 &   + \int_{\Omega} x^{k} f(x,c) \phat(t,x) \d x - m_k(t)  \int_{\Omega} f(x,c)\phat(t,x) \d x,
 \end{aligned}
 \right \}
 \end{equation}
 as in Eq.~\eqref{Eq:GenericMomentIDE}. This is complemented with the initial condition 
\begin{equation*}
    m_k(0)=\int_\Omega x^k \phat(0,x) \d x = \frac{1}{P(0)}\int_\Omega x^k p^0(x) \d x ,
\end{equation*}
obtained, again, from the definition~\eqref{def:hom} and initial condition~\eqref{Eq:ICs}, completing~\eqref{Eq:GenericMomentIDE:ICs}. 

\end{proof}

 \begin{remark} \label{Remark:BoundaryZerop}
 The integro-differential equations for $m_k(t)$ include the distribution $\phat(t,x)$ evaluated on the boundary $\partial \Omega.$ Due to diffusion, the distribution is not identically zero at the boundary. However, the biological interpretation of the phenotypic variable (see Remark~\ref{Remark:SpatialInterpretation}) implies that the population density at the boundary, while non-zero, is sufficiently small to be unobservable in biological data. Therefore, in what follows, we make the biologically motivated assumption that the contribution of the boundary terms is negligible, so
 \begin{align}\label{Eq:BoundaryAssumption}
       \left.  \left[ \phat(t,x) \right] \right| _{\partial\Omega} =  
 \left. \left[x^k \phat(t,x) \right]\ \right| _{\partial\Omega} = 0.
 \end{align}
     
 \end{remark}

\subsection{Restriction to a bounded phenotypic domain} \label{Sec:TruncationSection1} 

The system of equations in \eqref{Eq:GenericMomentIDE} involves integrating $\phat(t,x)$ over the entire phenotypic domain $\Omega$. Consequently, without making any further assumptions on the phenotypic distribution, the system~\eqref{Eq:GenericMomentIDE} is circular as the resulting integro-differential equations for the moments $m_n(t)$ \textit{a priori} require the distribution $\phat(t,x)$. Until now, we have considered a generic compact and connected phenotypic domain $\Omega=[l,L]$,  intrinsic growth rate $f(t,x)$, and adaptation velocity 
$V(t,x)$. Importantly, the integral terms in \eqref{Eq:GenericMomentIDE} depend on these functions and implicitly on the domain $\Omega$. Indeed, it is possible to restrict $\Omega$ to the unit interval via a simple linear transformation~\citep{almeida2024evolutionary,stace2020discrete}.

We now show that, by considering a bounded phenotypic domain restricted to $\Omega = [0,1]$, and functions $f(t,x)$ and $V(t,x)$ that are analytic in phenotype space, we can eliminate the redundancy in Eq.~\eqref{Eq:GenericMomentIDE}. Specifically, building on the analytical strategies adopted in~\citep{Dieckmann1996,Engblom:2006,lee2009moment}, we show how utilizing the Taylor expansions of both $f(t,x)$ and $V(t,x)$ transforms Eq.~\eqref{Eq:GenericMomentIDE} into a system of differential equations that depend only on the moments $m_k(t)$, $k\geq1$. 
\begin{corollary}
Consider $x\in\Omega\equiv [0,1]$. Let $p(t,x)$ satisfy Eq.~\eqref{Eq:StructuredPDE}, along with boundary conditions~\eqref{Eq:BCs}, initial conditions~\eqref{Eq:ICs} and definition~\eqref{Eq:PopulationSize}. Further, assume that both $f$ and $V$ are analytic functions of $x$ and that equality~\eqref{Eq:BoundaryAssumption} holds. 

Then, $m_0(t) =1$ for all $t \geq 0$ by definition, and the higher order moments $m_k(t)$ ($k \geq 1$) of the phenotypic distribution $\phat(t,x)$ satisfy the following system of ODEs
\begin{equation} \label{Eq:ExpandedGenericMomentODE}
\left. 
\begin{aligned}
\TimeDeriv m_1(t) & =  \displaystyle \sum_{n=0}^{\infty} \left(V_n(t)  - m_1(t)f_n(t) \right) \left[ \displaystyle \sum_{i=0}^n (-1)^{i} \binom{n}{i} (m_1(t))^{i}  \right] m_{n-i}(t)  \\ 
& {} \quad + \displaystyle \sum_{n=0}^{\infty}  f_n(t) \left[  \displaystyle \sum_{i=0}^n  (-1)^{i} \binom{n}{i} (m_1(t))^{i} m_{n+1-i}(t) \right]  \\
 \TimeDeriv m_k(t) & = - m_k(t) \displaystyle \sum_{n=0}^{\infty} f_n(t)  \displaystyle \left[ \sum_{i=0}^n  (-1)^{i}  \binom{n}{i} (m_1(t))^{i} m_{n-i}(t) \right]   + \beta  k(k-1) m_{k-2}(t) \\ 
    & {} \quad   + \displaystyle \sum_{n=0}^{\infty} \left[  \displaystyle \sum_{i=0}^n  (-1)^{i} \binom{n}{i} (m_1(t))^{i}  \right] \left[ f_n(t)m_{n+k-i}(t)  + kV_n(t)  m_{n+(k-1)-i}(t) \right]  , \quad k \geq 2.
 \end{aligned}
\right \}
\end{equation}
with initial conditions given by Eq.~\eqref{Eq:GenericMomentIDE:ICs}, and where $f_n(t)$ and $V_n(t)$ are defined as
\begin{align}\label{def:fnvn}
f_n(t) = \frac{\partial_x^n f(t,x)|_{x=m_1(t)}}{n!} \quad \textrm{and} \quad   V_n(t) = \frac{\partial_x^n V(t,x)|_{x=m_1(t)}}{n!}.
\end{align} 
\end{corollary}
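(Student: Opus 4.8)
The plan is to start from the integro-differential system~\eqref{Eq:GenericMomentIDE} of Proposition~\ref{Prop1}, to invoke the boundary assumption~\eqref{Eq:BoundaryAssumption} to discard every term evaluated on $\partial\Omega$, and then to rewrite each remaining integral term --- all of which have the form $\int_\Omega x^{j} g(t,x)\phat(t,x)\,\d x$ with $g\in\{f,V\}$ and $j\in\{0,1,k-1,k\}$ --- purely in terms of the moments $m_\ell(t)$. First I would note that, since $\phat(t,\cdot)$ is a probability density supported on $\Omega=[0,1]$, the first moment $m_1(t)=\int_\Omega x\,\phat(t,x)\,\d x$ lies in $[0,1]=\Omega$, so the Taylor expansions of $f$ and $V$ about $x=m_1(t)$, namely $f(t,x)=\sum_{n=0}^{\infty} f_n(t)\,(x-m_1(t))^n$ and $V(t,x)=\sum_{n=0}^{\infty} V_n(t)\,(x-m_1(t))^n$ with $f_n,V_n$ as in~\eqref{def:fnvn}, are expansions about a point of $\Omega$, which by analyticity converge on $\Omega$.

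Next I would substitute these series into each integral term and interchange summation and integration --- legitimate because the convergence is uniform on the compact set $\Omega$ (equivalently, by dominated convergence, as $\phat(t,\cdot)$ is a probability density) --- to obtain $\int_\Omega x^{j} g(t,x)\phat(t,x)\,\d x=\sum_{n=0}^{\infty} g_n(t)\int_\Omega x^{j}(x-m_1(t))^n\phat(t,x)\,\d x$. Expanding $(x-m_1(t))^n=\sum_{i=0}^{n} (-1)^{i}\binom{n}{i}(m_1(t))^{i} x^{n-i}$ by the binomial theorem and recalling $m_\ell(t)=\int_\Omega x^\ell\phat(t,x)\,\d x$ from~\eqref{def:hom}, each such integral collapses to the \emph{finite} sum $\sum_{i=0}^{n} (-1)^{i}\binom{n}{i}(m_1(t))^{i}\,m_{n+j-i}(t)$; note that the indices appearing satisfy $n+j-i\geq j\geq 0$, so only non-negative moments (with $m_0\equiv 1$) ever occur.

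It then remains to assemble these identities with $j=0$ (for $\int_\Omega V\phat\,\d x$ and $\int_\Omega f\phat\,\d x$), $j=1$ (for $\int_\Omega xf\phat\,\d x$), $j=k-1$ (for $\int_\Omega x^{k-1}V\phat\,\d x$), and $j=k$ (for $\int_\Omega x^{k}f\phat\,\d x$), and to group terms: in the $m_1$ equation the $j=0$ contributions coming from $V$ and from $-m_1 f$ combine into the factor $\big(V_n(t)-m_1(t)f_n(t)\big)$, while the $j=1$ contribution from $f$ yields the second line of~\eqref{Eq:ExpandedGenericMomentODE}; in the $m_k$ equation ($k\geq 2$) the $j=0$ term from $-m_k f$, the unchanged diffusion term $\beta k(k-1)m_{k-2}(t)$ (whose derivation is exactly the integration-by-parts step already carried out in Proposition~\ref{Prop1}), and the $j=k-1$ and $j=k$ terms combine into the stated form. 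The initial conditions are inherited verbatim from~\eqref{Eq:GenericMomentIDE:ICs}, which completes the argument.

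I expect the only genuinely delicate point to be the justification of the term-by-term integration, i.e.\ the interchange of the infinite Taylor series with the integral over $\Omega$: this needs the radius of convergence of the Taylor series of $f(t,\cdot)$ and $V(t,\cdot)$ about $m_1(t)$ to exceed $\max\{m_1(t),\,1-m_1(t)\}$ for every $t$, which is guaranteed if ``analytic'' is read in the strong sense of real-analytic on an open set comfortably containing $[0,1]$ (in particular if $f$ and $V$ are entire in $x$); one should either record this strengthening in the hypotheses or restrict attention to such $f,V$. A secondary remark worth stating is that the system~\eqref{Eq:ExpandedGenericMomentODE} is still not closed --- the equation for $m_k(t)$ couples to the higher moments $m_{n+k-i}(t)$ --- so no truncation is performed here; closing the hierarchy is the subject of the next step.
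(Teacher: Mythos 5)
Your proposal follows essentially the same route as the paper's proof: Taylor expanding $f$ and $V$ about $x=m_1(t)$, applying the binomial expansion of $(x-m_1(t))^n$ to convert each integral into finite sums of moments, and substituting into the system of Proposition~\ref{Prop1} with the boundary terms removed via~\eqref{Eq:BoundaryAssumption}. Your additional remarks on justifying the series--integral interchange (convergence of the expansions on all of $[0,1]$) and on the system remaining unclosed are sound refinements of what the paper treats formally, but they do not change the argument.
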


\begin{proof}
    As both $f$ and $V$ are analytic functions of $x$, we Taylor expand these functions about the 
    first moment $m_1(t)$ to find
\begin{align*}
f(x,t) = \displaystyle \sum_{n=0}^{\infty} \frac{\partial_x^n f(t,x)|_{x=m_1(t)}(x-m_1(t))^n}{n!} \quad \textrm{and} \quad V(x,t) = \displaystyle \sum_{n=0}^{\infty} \frac{\partial_x^n V(x,t)|_{x=m_1(t)}(x-m_1(t))^n}{n!}.
\end{align*}
    Then, using definitions~\eqref{def:fnvn}, the binomial expansion of $(x-m_1(t))^n$ gives
\begin{align*}
f(x,c) = \displaystyle \sum_{n=0}^{\infty}  f_n(t) \left[ \displaystyle \sum_{i=0}^n (-1)^{i} \binom{n}{i} x^{n-i}(m_1(t))^{i} \right] \quad \textrm{and} \quad V(x,c) = \displaystyle \sum_{n=0}^{\infty} V_n(t)  \left[ \displaystyle \sum_{i=0}^n (-1)^{i} \binom{n}{i} x^{n-i} (m_1(t))^{i} \right].
\end{align*}
Thus, utilizing the Taylor expansion of $f$ and the definition of $m_n$, combining~\eqref{def:mgf}-\eqref{def:hom}, gives
\begin{align*}
\int_0^1 f(x,c) \phat(t,x) \d x & = \displaystyle \sum_{n=0}^{\infty} \left[ \displaystyle \sum_{i=0}^n  (-1)^{i} f_n(t) \binom{n}{i} (m_1(t))^{i} \int_0^1 x^{n-i}\phat(t,x) \d x \right] \\
& = \displaystyle \sum_{n=0}^{\infty}   \displaystyle \left[ \sum_{i=0}^n  (-1)^{i} f_n(t) \binom{n}{i} (m_1(t))^{i} m_{n-i}(t) \right] .
\end{align*}
Then, for integer $k$, a similar calculation yields
\begin{align*}
\int_0^1 x^{k} f(x,c)\phat(t,x) \d x = \displaystyle \sum_{n=0}^{\infty} \left[  \displaystyle \sum_{i=0}^n  (-1)^{i} f_n(t) \binom{n}{i} (m_1(t))^{i} m_{n+k-i}(t) \right],
\end{align*} 
and 
\begin{align*}
\int_0^1 k x^{k-1} V(x,c) \phat(t,x) \d x = \displaystyle \sum_{n=0}^{\infty} V_n(t)  \left[ \displaystyle \sum_{i=0}^n k (-1)^{i} \binom{n}{i} (m_1(t))^{i}  m_{n+(k-1)-i}(t)\right] .  
\end{align*} 
Inserting these expansions into the ODEs~\eqref{Eq:GenericMomentIDE}, and using~\eqref{Eq:BoundaryAssumption}, immediately yields Eq.~\eqref{Eq:ExpandedGenericMomentODE}. 
\end{proof}

These Taylor expansions replace the integral terms in \eqref{Eq:GenericMomentIDE} by weighed moments of the distribution $\phat(x,t)$. However, the resulting differential equations involve moments of all orders, each of which needs to be defined by a corresponding ODE. Consequently, replacing the integral terms by the corresponding infinite summations in \eqref{Eq:GenericMomentIDE} leads to a system of infinitely many ODEs wherein the differential equation for the $k$-th moment depends on higher order moments.  Nevertheless, higher order moments are not generally used to describe biological data. Therefore, we proceed under the modelling assumption that the phenotypic density function $p(t,x)$, and thus $\phat$, is sufficiently well characterised by its first $N$ moments, so we assume that we can discard the higher order moments.  
We illustrate how this assumption has been applied in existing models in Section~\ref{Sec:Examples} and discuss the limitations of this assumption in the Discussion.  
Moreover, we consider the asymptotic behaviour of the terms in the summations,  to truncate the series and close the system.

\subsection{Series truncation and system closure} \label{Sec:TruncationSection2} 
Consider the infinite series appearing in Eq.~\eqref{Eq:ExpandedGenericMomentODE}$_1$, i.e. the term 
$$
T(t) = \sum_{n=0}^{\infty} f_n(t)\left[\displaystyle \sum_{i=0}^n   (-1)^{i}  \binom{n}{i} (m_1(t))^{i} m_{n-i}(t)\right] \,.
$$

Consider the fact that $0<m_1(t)<1$ for all $t\geq0$, where the strict inequality can be ensured by appropriate modelling choices, as discussed in Remark~\ref{Remark:BoundaryZerop}.  The strict upper bound on $m_1(t)$ implies that $(m_1)^i\to0$ as $i\to\infty$.  Having chosen $f(t,x)$ analytic in a bounded domain, $f_n(t)$ is bounded for all $n\in\mathbb{N}$, so the coefficients of $(m_1)^i$ are bounded. We may choose $M\in\mathbb{N}$ at which to truncate the Taylor series expansions for $f$ and $V$.
This allows us to truncate the infinite summations at $M$ and approximate $T(t)$ by 
\begin{align*}
T_{M}(t) =\sum_{n=0}^{M} f_n(t) \left[ \displaystyle \sum_{i=0}^n  (-1)^{i}\binom{n}{i} (m_1(t))^{i} m_{n-i}(t) \right]\,.
\end{align*}
The same truncation can be applied to the infinite summations in Eq~\eqref{Eq:ExpandedGenericMomentODE} involving $V_n$. 

Moreover, as a result of the assumption that the phenotypic distribution $\phat(t,x)$ is sufficiently well characterised by its first $N$ moments, we find that the system~\eqref{Eq:ExpandedGenericMomentODE} can be approximated by a finite system of ODEs for the $N$ moments of $\phat(t,x)$. 
The overall dynamics of the phenotypic density function of the population are approximated by
\begin{equation}\label{Eq:TruncatedGenericMomentODE:open}
\left. 
\begin{aligned} 
\TimeDeriv  P(t) & =  P(t) \displaystyle \sum_{n=0}^{M} f_n(t) \left[ \displaystyle \sum_{i=0}^n  (-1)^{i}  \binom{n}{i} (m_1(t))^{i} m_{n-i}(t) \right] - \frac{P^2(t)}{\kappa} , \\
\TimeDeriv m_1(t) & =  \displaystyle \sum_{n=0}^{M} \left(V_n(t)  - m_1(t)f_n(t) \right) \left[ \displaystyle \sum_{i= 0}^n (-1)^{i} \binom{n}{i} (m_1(t))^{i}  m_{n-i}(t) \right] \\
&   \quad + \displaystyle \sum_{n=0}^{M} f_n(t) \left[  \displaystyle \sum_{i=0}^n  (-1)^{i} \binom{n}{i} (m_1(t))^{i} m_{n+1-i}(t) \right] , \\
\TimeDeriv m_k(t) & = - m_k(t) \displaystyle \sum_{n=0}^{M} f_n(t)  \displaystyle \left[ \sum_{i= 0}^n  (-1)^{i}  \binom{n}{i} (m_1(t))^{i} m_{n-i}(t) \right]   + \beta  k(k-1) m_{k-2}(t)  \\
    &   \quad + k \displaystyle \sum_{n=0}^{M} V_n(t)  \left[ \displaystyle \sum_{i= 0}^n  (-1)^{i} \binom{n}{i} (m_1(t))^{i} m_{n+(k-1)-i}(t)  \right]    \\
    & \quad  + \displaystyle \sum_{n=0}^{M} f_n(t)  \left[  \displaystyle \sum_{i= 0}^n  (-1)^{i} \binom{n}{i} (m_1(t))^{i} m_{n+k-i}(t) \right] , \quad 2 \leq k \leq N .   
\end{aligned}
\right \}
\end{equation}
The corresponding initial conditions of Eq.~\eqref{Eq:TruncatedGenericMomentODE:open} are obtained directly from the initial phenotypic distribution $\phat(0,x)$ via Eq.~\eqref{Eq:GenericMomentIDE:ICs}. 
While system~\eqref{Eq:TruncatedGenericMomentODE:open} is finite, it still involves the moments $m_k$ for $k=N+1,...,N+M$. To close the system one may choose, for instance, Gaussian closure by setting
\begin{equation}\label{closure:Gaussian}
    m_k = m_1m_{k-1} + k(m_2-m_1^2)m_{k-2} \quad \text{for} \quad k=N+1,...,N+M,
\end{equation}
or truncation closure by setting
\begin{equation}\label{closure:truncation}
    m_k = 0 \quad \text{for} \quad k=N+1,...,N+M,
\end{equation}
-- in which case system~\eqref{Eq:TruncatedGenericMomentODE:open} reads as~\eqref{Eq:TruncatedGenericMomentODE}. 

Other approaches to closing this system of ODEs to approximate the moments of the phenotypic distribution are possible \cite{Browning.2020,wagner2022quasi}.

\section{Special cases and examples in the literature}\label{Sec:Examples}

We here expand on some examples of how our generalised approach can be applied to more specific cases. 
We recover ODE systems for the moments of the phenotypic distribution previously considered in the literature from the generalised system~\eqref{Eq:ExpandedGenericMomentODE}, derived in Section~\ref{Sec:Analysis}. 
This procedure also brings to light the details of how the obtained systems depend on the underlying assumptions on the nature of the phenotypic distribution or specific modelling choices. 

\subsection{The case of $f$ and $V$ polynomials}\label{sec:example:polynomial}

We now focus on the case in which $f(t,x)$ and $V(t,x)$ are defined as polynomial functions of $x$, as done in various mathematical models employing PDEs in the form of Eq.~\eqref{Eq:StructuredPDE} for the adaptive dynamics of a phenotype-structured population of cells, e.g. see~\cite{almeida2024evolutionary,chisholm2016evolutionary} and references therein.

If the functions $f$ and $V$ are polynomials then, as in~\cite{Curto2023}, one need not restrict the domain to a bounded set for the results of Section~\ref{Sec:Analysis} to hold.  
In fact, a polynomial of order $D\in\mathbb{N}$ can easily be expressed in the form of a Taylor series truncated at $D$, as all derivatives of order higher than $D$ will be zero.  It is then natural to replace $M$ in the upper bounds of the summations in system~\eqref{Eq:TruncatedGenericMomentODE} by $D:=\max (D_f,D_V)$, where  $f(t,x)$ and $V(t,x)$ are polynomials of order $D_f$ and $D_V$, respectively.  
Then one need not rely on the restriction of $\Omega$ to the interval $[0,1]$ to ensure that the infinite summations, a product of the Taylor expansion of $f$ and $V$, can be truncated at some $M\in\mathbb{N}$ which, we re-iterate, rely on the fact that $0<m_1<1$ under this domain restriction.  
This allows the extension of the results to the case in which $\Omega=\mathbb{R}$, as considered in previous works deriving a system of ODEs for the moments of the phenotypic density function starting from phenotype-structured PDEs~\citep{almeida2024evolutionary,almeida2019evolution,ardavseva2020evolutionary,chisholm2016evolutionary,lorenzi2015dissecting,villa2021evolutionary}. 

We remark that, while choosing $f$ and $V$ to be polynomials allows one to bypass choosing $M$ for truncation of the infinite series, it does not automatically close the system of ODEs for the moments, and one is still required to identify the highest moment required to characterise the distribution to achieve this. Further, as $\Omega=\mathbb{R}$ is not compact, the moment sequence is not sufficient to characterise the distribution without additional assumptions. In the aforementioned papers, this was done implicitly by introducing stronger assumptions on the shape of the phenotypic distribution, which could be introduced only thanks to the use of an infinite domain. Let us expand on this with an example.

\paragraph{Example from the literature: a Gaussian distribution.} Consider $\Omega\equiv\mathbb{R}$, as well as $f(t,x)$ and $V(t,x)$ defined by
\begin{equation}\label{Eq:def:fv}
    f(t,x) = a(t) - b(t) \left( x- X(t) \right)^2  \quad  \text{and} \quad V(t,x)\equiv V_0(t),
\end{equation}
i.e. polynomials of order 2 and 0, respectively. 
In definition~\eqref{Eq:def:fv} for $f$, $X(t)$ models the fittest phenotypic trait, $a(t)$ the associated maximal background fitness and $b(t)$ is a nonlinear selection gradient measuring the strength of the selective environment.
Under the assumption of an initial phenotypic distribution $p^0(x)$ in a Gaussian form, \cite{chisholm2016evolutionary} first showed that $p(t,x)$ maintains a Gaussian form at all times, with the population size $P(t)$, the mean $m_1(t)$ and variance $\sigma^2(t)$ of the distribution satisfying the system of ODEs
    \begin{equation}\label{Eq:GaussianODE}
\left. 
\begin{aligned} 
    \TimeDeriv P(t) =& \left[ a(t) - b(t)\left(m_1(t)-X(t)\right)^2 - b(t)\sigma^2(t) \right] P(t) - \frac{P^2(t)}{\kappa} , \\
    \TimeDeriv m_1(t) =& -2b(t)\left(m_1(t)-X(t)\right)\sigma^2(t) + V_0(t), \\
    \TimeDeriv \sigma^2(t) =& 2\beta -2b \sigma^4(t) , 
\end{aligned}
\right \}
\end{equation}
complemented with initial conditions $P(0)$, $m_1(0)$ and $\sigma^2(0)$, i.e. the corresponding moments characterising the initial Gaussian phenotypic density function $p^0(x)$. Analogous assumptions and results have since appeared in several following works by Lorenzi and coworkers~\citep{almeida2024evolutionary,almeida2019evolution,ardavseva2020evolutionary,lorenzi2015dissecting,villa2021evolutionary} modelling cancer adaptive dynamics in different settings. 
System~\eqref{Eq:GaussianODE} can be obtained via formal calculations following the substitution of a Gaussian ansatz in Eq.~\eqref{Eq:StructuredPDE} under definitions~\eqref{Eq:def:fv}, 
with most publications working with the inverse variance $v=(\sigma^{2})^{-1}$ for convenience. 


\paragraph{Retrieving system~\eqref{Eq:GaussianODE} from our generalised approach.} Now consider the case in which $f$ and $V$ are defined as in~\eqref{Eq:def:fv}. We show how the example above is a sub-case of our generalised approach, assuming that the phenotypic distribution can be fully characterised by its first 2 moments. We thus consider system~\eqref{Eq:TruncatedGenericMomentODE:open} with $N=2$, the infinite summations including $f_n(t)$ truncated at $D_f=2$ and those only including $V_n(t)$ truncated at $D_V=0$. We then choose to apply Gaussian closure -- indeed the obvious choice in this case -- and complement the ODE system with~\eqref{closure:Gaussian} for $m_3$ and $m_4$. The resulting system of ODEs is
\begin{equation*}\label{Eq:Ex1TruncatedGenericMomentODE}
\left. 
\begin{aligned} 
\TimeDeriv  P(t) & =  P(t) \left[ f_0(t) + f_2(t) (m_2 -m_1^2)\right] - \frac{P^2(t)}{\kappa} , \\
\TimeDeriv m_1(t) & = f_1(t) (m_2 -m_1^2) + V_0(t) , \\
\TimeDeriv m_2(t) & = 2V_0m_1 + 2\beta +2 f_1(t)m_1(m_2-m_1^2) +2f_2(t) (m_2 - m_1^2)^2,   
\end{aligned}
\right \}
\end{equation*}
which is equivalent to system~\eqref{Eq:GaussianODE} -- this is easy to check by substituting $m_2=\sigma^2+m_1^2$, from the definition of second central moment, and applying the definitions of $f_i$ ($i=0,1,2$) and $V_0$ in~\eqref{def:fnvn} on $f$ and $V$ chosen in~\eqref{Eq:def:fv}.

This demonstrates that system~\eqref{Eq:TruncatedGenericMomentODE:open} may provide a good approximation for the moment dynamics of Eq.~\eqref{Eq:StructuredPDE}. We stress that in this case the systems are equivalent only thanks to the fact that $f$ and $V$ are polynomials of order at most 2, and the properties~\eqref{closure:Gaussian} of a Normal distribution, which may be exploited when adopting a Gaussian closure. 
Unlike in~\citep{almeida2024evolutionary,almeida2019evolution,ardavseva2020evolutionary,chisholm2016evolutionary,lorenzi2015dissecting,villa2021evolutionary}, our procedure does not need to rely on an ansatz or an infinite domain, but it holds for the more realistic case of $\Omega=[l,L]$ and under more general assumptions on the nature of the phenotypic distribution. 

\subsection{An example for $f$ and $V$ smooth but not polynomial}\label{sec:example:new}

Indeed for $f$ and $V$ polynomials the truncation of the infinite summations in system~\eqref{Eq:TruncatedGenericMomentODE:open} is automatic. 
While polynomial definitions for $f$ and $V$ are widely used in continuously structured models in mathematical oncology, other modelling choices may lead to alternative definitions of these functions. In general, these functions are usually sufficiently smooth to admit a Taylor series approximation~\cite{celora2021phenotypic,Cho2018}. 
We now consider an example with $f$ and $V$ not polynomial, in which case they must be approximated by their Taylor expansions truncated at some $M$. Let $f$ and $V$ be defined as
\begin{equation}\label{ex:fV:adhoc}
    f(t,x) = f_{max}\frac{x}{k_x+x} \quad \text{and} \quad V(t,x) = V_{max}\tanh{(x^\omega)}\tanh{(1-x)}\,.
\end{equation}
This definition for $f$ may be used in the case where $x$ models the level of expression of some protein on the cell membrane that is required to transport nutrients into the cell, e.g. see~\cite{almeida2024evolutionary}, in which case a Hill function is natural following Michaelis-Menten kinetics. Then $k_x>0$ is the Michaelis coefficient and $f_{max}\geq0$ the maximum intrinsic growth rate. Definition~\eqref{ex:fV:adhoc} for $V$ was taken from~\cite{celora2021phenotypic}, one of the most complex choices of V currently found in the literature, proposed following phenomenological rules to capture the effect of radiotherapy on differentiation, with $V_{max}\geq0$ and $\omega\in\{1,2\}$. For simplicity, we take the initial phenotypic distribution to be a truncated Normal distribution in the interval $\Omega=[l,L]$, i.e. we take
\begin{equation}\label{ic:truncatednormal}
    p(0,x) = \displaystyle{P(0)\, \frac{\exp{\left[-\frac{(x-\bar{x}_0)^2}{2\sigma^2_0}\right]}}{\int_l^L \exp{\left[-\frac{(s-\bar{x}_0)^2}{2\sigma^2_0}\right] {\rm d}s}}}\,.
\end{equation}
The first step in the application of system~\eqref{Eq:TruncatedGenericMomentODE:open} to approximate the moment dynamics is to carefully select the domain bounds, to ensure that~\eqref{Eq:BoundaryAssumption} is satisfied. In~\cite{celora2021phenotypic} the authors select $x\in[0,1]$ but under definition~\eqref{ex:fV:adhoc} for $V$ we expect the mass to concentrate around $x=1$ over time and thus a wider domain is necessary. Here we use a parameter set for which the interval $[l,L]=[0,1.2]$ is sufficiently large -- with $\beta$ and $\sigma^2_0$ sufficiently small and $\bar{x}_0$ sufficiently far from these boundaries. We thus apply the transformation $x\to x/L$ prior to solving the ODE system numerically, to ensure that $0<m_1(t)<1$ for all $t\geq0$ as required in Section~\ref{Sec:TruncationSection2}. The results of numerical simulations shown in Fig~\ref{fig:ex2} (as well as Fig~\ref{fig:ex2:truncation} and~\ref{fig:ex2:om2} in the Supplemental Information) have been rescaled back to the original domain (by means of $m_k\to m_k\times L^k$). 
\begin{figure}[htb!]
    \centering
    \includegraphics[width=1\linewidth]{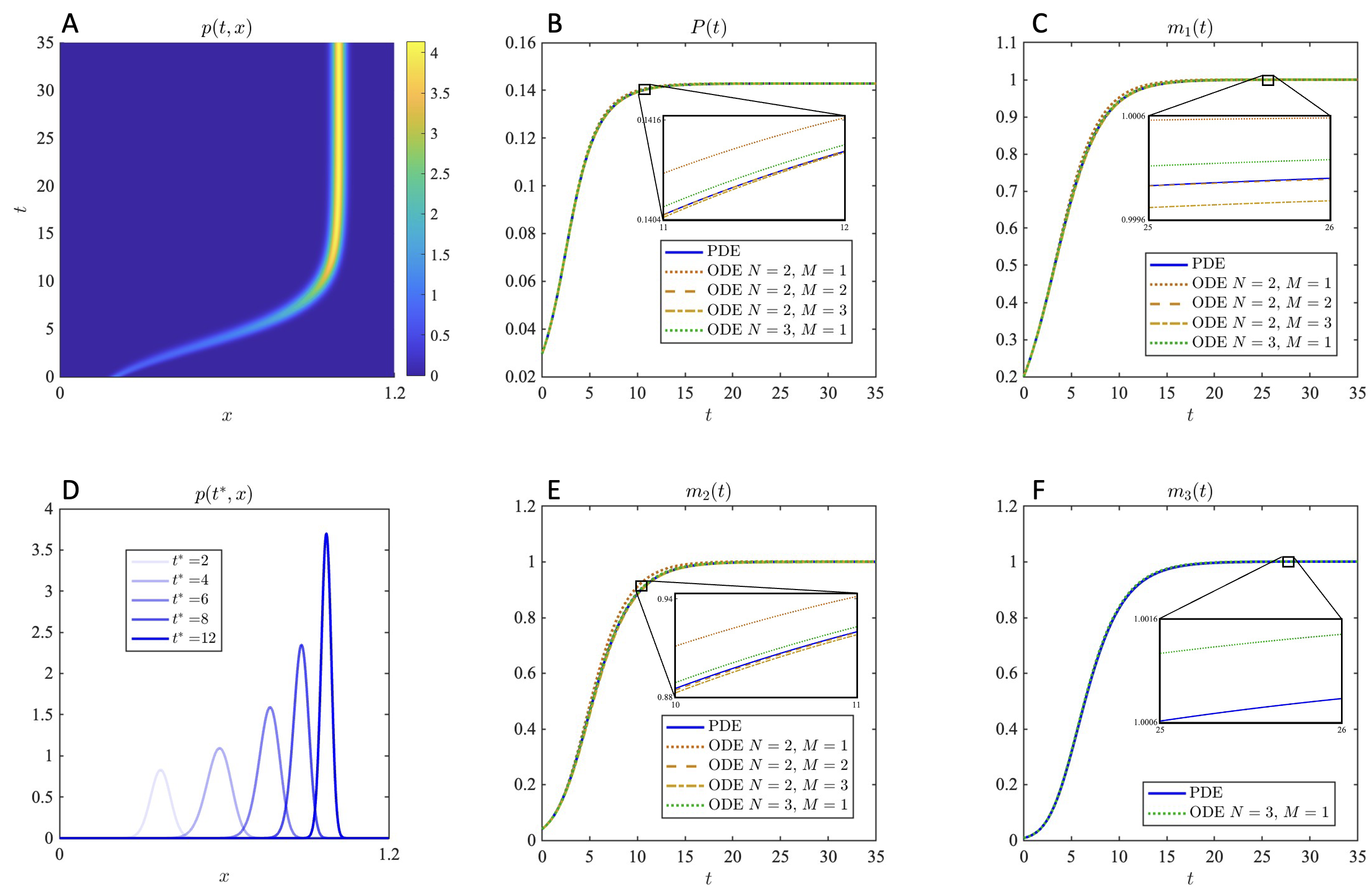}
    \caption{Comparison between the numerical solution of PDE~\eqref{Eq:StructuredPDE}, complemented with Eq.~\eqref{Eq:PopulationSize} and boundary conditions~\eqref{Eq:BCs}, and that of the ODE system~\eqref{Eq:TruncatedGenericMomentODE:open} under Gaussian closure~\eqref{closure:Gaussian}, for $f$ and $V$ defined in~\eqref{ex:fV:adhoc} ($\omega=1$) and initial condition~\eqref{ic:truncatednormal}. The phenotypic density function $p(t,x)$ obtained by solving the PDE is shown in the first column for $t\in[0,35]$ (panel A) and at selected times $t^*=2,4,6,8,12$ (D). The corresponding moments are plotted with a blue solid line in the remaining panels: the populations size $P(t)$ (B), the first moment $m_1(t)$ (C), the second moment $m_2(t)$ (E) and the third moment $m_3(t)$ (F). The moments obtained from the ODE system are also shown in each panel under different choices of $N$ and $M$: $N=2$ and $M=1$ (brown dotted line), $N=2$ and $M=2$ (brown dashed line), $N=2$ and $M=3$ (brown dash-dotted line), $N=3$ and $M=1$ (green dotted line). Zoomed-in insets are provided to facilitate comparison among approximations. 
    The results correspond to the parameter set: $f_{max}=2$, $k_m=0.4$, $V_{max}=0.5$, $\omega=1$, $\kappa=0.1$, $\beta=10^{-4}$, $\bar{x}_0=0.2$, $\sigma_0=0.02$, $l=0$, $L=1.2$. The numerical schemes rely on a first order forward difference approximation for the time derivatives, second order central difference for the diffusion term and first order upwind for the advection term.   }
    \label{fig:ex2}
\end{figure}

Setting $\omega=1$ in the definition~\eqref{ex:fV:adhoc} for $V$, we see in Fig~\ref{fig:ex2} that the ODE system~\eqref{Eq:TruncatedGenericMomentODE:open} under Gaussian closure~\eqref{closure:Gaussian} provides a really good approximation of the dynamics of the moments of the phenotypic density function satisfying PDE~\eqref{Eq:StructuredPDE} for many combinations of $N$ and $M$, without needing to consider $N$ and $M$ particularly large. In fact, it is interesting to notice from the zoomed in insets that the best approximation -- among those tested -- is obtained when selecting $N=2$ and $M=2$ rather than for higher values of $N$ and $M$ (c.f. $N=2$ and $M=3$, or $N=3$ and $M=1$). As expected when looking at the solution of the PDE (cf. first column in Fig~\ref{fig:ex2}), truncation closure does not perform as well as Gaussian closure (see also Fig~\ref{fig:ex2:truncation} in the Supplemental Information). In the more complex case of $\omega=2$, for which we observe a higher skewedness of the phenotypic distribution at intermediate times, not all choices of $N$ and $M$ provide as good an approximation (see also Fig~\ref{fig:ex2:om2} in the Supplemental Information). Nevertheless, the observation that $N=2$ and $M=2$ yields the best approximation among those tested holds also in this case, where we also tested $N=3$ and $M=3$. 
Ultimately, we expect the combination of $N$ and $M$ that best approximate the moment dynamics to vary with the definitions of $f$ and $V$, as well as the best choice of closure for system~\eqref{Eq:TruncatedGenericMomentODE:open}. Indeed the presence of a linear diffusion term in PDE~\eqref{Eq:StructuredPDE} may favour Gaussian-like features.

\subsection{Link with canonical model of adaptive dynamics} \label{Sec:PressleyExample}
We have seen how our model reduction procedure may perform well for non-polynomial, but sufficiently smooth, functions $f$ and $V$. Models of adaptive dynamics~\citep{Dieckmann1996,Vincent2008}, 
which model the size and mean trait of the population via two ODEs, typically utilize the ``$G$ function'' formalization \citep{Coggan2022}. This ``$G$ function'' links the population fitness to the phenotypic evolution by assuming that the population evolves to maximize the population fitness.
In general, these adaptive dynamics models do not restrict the phenotype space $\Omega$, but rather implicitly assume that the functions $f$ and $V$ are only weakly non-linear and that the population is monomorphic so that higher order moments are negligible \citep{Dieckmann1996}. In practice, this corresponds to setting $M = N =1$ in Section~\ref{Sec:TruncationSection2}, and neglecting terms involving $m_k, m_1^k, f_k$ or $V_k$ for $ k \geq 2$. These assumptions are possibly quite strong when considering cancer evolution \citep{Aguade-Gorgorio2018}, particularly due to the role of phenotypic diversity in the population-level response to therapy. Here, we utilize the model proposed by \citet{Pressley2021} to illustrate the possible consequences of these modelling assumptions.  

\subsubsection{Application to adaptive dynamics in cancer} 
In recent work, \citet{Pressley2021} developed a mathematical framework to capture continuous adaptation to treatment, in which the cellular phenotype is considered as a direct measure of cell resistance to anti-cancer therapy. 
The authors consider a monomorphic population with population size $P(t)$ and phenotype $m_1(t)$ -- where we use this notation as the phenotype of a monomorphic population corresponds to the mean phenotypic state (i.e. the first moment) of a very sharp (Dirac delta) phenotypic distribution. Their model is given by 
\begin{equation} \label{Eq:PressleyGenericSystem}
    \left. 
\begin{aligned}
    \TimeDeriv P(t) & = P(t) G(m_1(t),P(t)) ,\\
    \TimeDeriv m_1(t) & = \alpha \left. \partial_x  G(x,y) \right|_{(x,y) = (m_1(t),P(t))} . 
\end{aligned}
\right \}
\end{equation}
In Eq.~\eqref{Eq:PressleyGenericSystem}, $\alpha$ is the speed of phenotypic adaptation and the function $G$ captures the net proliferation rate of the population $P$ in the presence of anti-cancer treatment under the assumption that all tumour cells have phenotype $m_1$. Specifically, \citet{Pressley2021} set
\begin{align*}
    G(m_1(t),P(t)) = r( m_1(t) )\left(1-\frac{P(t)}{\kappa} \right) -d -\frac{c(t)}{k+bm_1(t)} 
\end{align*}
where $\kappa$ is the carrying capacity of the population, $d$ is an intrinsic death rate in the absence of treatment, and   
\begin{align*}
    c(t) = \left \{ 
    \begin{array}{cc}
       1 & \textrm{during treatment}, \\
        0 & \textrm{otherwise},
    \end{array} 
    \right. 
\end{align*}
models the anti-cancer treatment. The treatment effect is modulated by the treatment resistance of the population with mean phenotype $m_1(t)$. The magnitude of the resistance benefit is modelled by $b$ and treatment resistance has a half-effect value of $k$. \citet{Pressley2021} used this model to quantify the benefits of adaptive therapy. 
They applied treatment until the tumour reached half the initial size, $P(t) = P(0)/2$. 
Treatment was then interrupted and withheld until the tumour reached the initial size, $P(t) = P(0)$, at which point treatment was re-applied. The model includes the cost of resistance by decreasing the intrinsic growth rate $r(m_1(t))$ of a population with mean phenotype $m_1(t)$ by
\begin{align*}
    r(m_1(t)) = r_{max}\exp(-g m_1(t)),
\end{align*}
where $r_{max}$ is the maximal growth rate and $g$ is the cost of resistance.

We now show how our framework can extend Eq.~\eqref{Eq:PressleyGenericSystem} to include population heterogeneity by considering higher order moments and setting $N=M=2$ in Eq.~\eqref{Eq:TruncatedGenericMomentODE:open}. 
\citet{Pressley2021} assume that the mean phenotype changes with direction determined by the gradient of $G$ and scaled by an adaptation speed $\alpha.$ Under these assumptions, the velocity $V(t,m_1)$ in the advection term of Eq.~\eqref{Eq:StructuredPDE} is given by  
\begin{align}\label{Eq:PressleyVelocityApproximation}
    V(t,m_1)  = \alpha \partial_x G(m_1,P).
\end{align} 
This assumption on the velocity of the advection term represents gradient ascent toward the phenotype of optimal fitness. For notational simplicity, we write for $k\in\mathbb{N}$
\begin{align*}
    \partial_x G(m_1,P) =  \left. \partial_x  G(x,y) \right|_{(x,y) = (m_1(t),P(t))}   \quad \textrm{and} \quad   \partial^k_x G(m_1,P) =  \left. \partial^k_x  G(x,y) \right|_{(x,y) = (m_1(t),P(t))} .
\end{align*}
We emphasize that this definition of $V(t,m_1)$ follows from the modelling assumptions made by \citet{Pressley2021}. However, our approximation framework applies in general to other choices for $V$ and $f$.  
Taking $N = M = 2$ -- motivated by the results of Section~\ref{sec:example:new} -- in system~\eqref{Eq:TruncatedGenericMomentODE:open} and using Gaussian closure, we obtain
\begin{equation}\label{Eq:PressleyGaussianMoments} 
\left. 
\begin{aligned} 
  \TimeDeriv  P & =  P \left[   G(m_1,P) +   \frac{\partial^2_xG(m_1,P)}{2} \left(m_2-m_1^2\right)  \right] ,  \\
 \TimeDeriv m_1 & = \left(\alpha +  \left(m_2-m_1^2\right)  \right) \partial_x G(m_1,P) \\
 & {} \quad + \left[ \frac{\alpha}{2} \partial^3_x G(m_1,P)- m_1 \frac{\partial^2_x G(m_1,P)}{2} \right] \left(m_2-m_1^2\right)  + \frac{\partial^3_x G(m_1,P)}{6} m_3 ,  \\
 \TimeDeriv m_2 & =  2 \left[ \alpha \partial_x G(m_1,P) m_1 + \alpha \partial^2_x G(m_1,P) \left(m_2-m_1^2\right) + \frac{\alpha}{2} \partial^3_x G(m_1,P)  \left(m_3-2m_1m_2+m_1^3 \right)  \right]\\
 & \quad {} 2\beta - m_2\left[ \frac{\partial^2_x G(m_1,P)}{2} \left(m_2-m_1^2\right) \right] \\
 &  \quad {} +  \partial_x G(m_1,P) \left(m_3 -m_1m_2\right) + \frac{\partial^2_x G(m_1,P)}{2} \left( m_4 - 2m_1m_3 + (m_1^2)m_2 \right) .
\end{aligned}
\right \}
\end{equation} 
where we do not explicitly show the time-dependence of the moments to simplify the notation. Similarly, we take $N = M = 2$ and use truncation closure to find
\begin{equation}\label{Eq:PressleyTruncationMoments} 
\left. 
\begin{aligned} 
  \TimeDeriv  P & =  P \left[   G(m_1,P) +   \frac{\partial^2_xG(m_1,P)}{2} \left(m_2-m_1^2\right)  \right]  ,  \\
 \TimeDeriv m_1 & = \left(\alpha + m_2-m_1^2\right)  \partial_x G(m_1,P) + \left[\alpha \frac{\partial^3_xG(m_1,P)}{2}  - 3m_1(t) \frac{\partial^2_xG(m_1,P)}{2} \right](m_2-m_1^2)  ,  \\
\TimeDeriv m_2 & = 2\beta + 2\alpha  \partial_xG(m_1,P) m_1 -G(m_1,P) m_2  + \left[  2 \alpha \partial^2_xG(m_1,P) - m_2 \frac{\partial^2_xG(m_1,P)}{2}  \right] (m_2-m_1^2)  . 
\end{aligned}
\right \}
\end{equation}

In both Eq.~\eqref{Eq:PressleyGaussianMoments} and Eq.~\eqref{Eq:PressleyTruncationMoments}, the population growth rate now depends not only on the fitness of the mean phenotype, $G(m_1,P),$ but also on it's curvature and the variance of the population about the mean phenotype, $\sigma^2 = m_2-m_1^2.$ 
Further, the differential equation for $m_1(t)$ links the phenotypic evolution with $\sigma^2$ in a similar manner to \citet{Dieckmann1996} in the context of genetic mutation.

To illustrate the impact of including phenotypic heterogeneity, we simulated the adaptive therapy regiment in Eq.~\eqref{Eq:PressleyGenericSystem}, Eq.~\eqref{Eq:PressleyGaussianMoments} and Eq.~\eqref{Eq:PressleyTruncationMoments}. We set $\alpha = 0.0001$ and otherwise used the same parameter values and initial conditions as \citet{Pressley2021}, with $P(0) = 6000$ and $m_1(0) = 0$. In Eq.~\eqref{Eq:PressleyGaussianMoments} and Eq.~\eqref{Eq:PressleyTruncationMoments}, we set $m_2(0) = m_1(0)^2$ to represent an initially monomorphic population, and set $\beta = 0.0002$. 
In Panels A and B of Fig~\ref{Fig:PressleyModelExample}, we show the predicted tumour population during adaptive therapy.  Eq.~\eqref{Eq:PressleyGenericSystem} predicts long-term tumour control, with the Time-To-Progression -- i.e. the first time $t=t_{TTP}$ at which $P(t_{TTP}) = 0.7\kappa$ -- of at least 800 days. 
Conversely, the models that include population heterogeneity, i.e. Eq.~\eqref{Eq:PressleyGaussianMoments} and Eq.~\eqref{Eq:PressleyTruncationMoments}, both predict the evolution of treatment resistance and failure of adaptive therapy within 28 days. In Panel C of Fig~\ref{Fig:PressleyModelExample}, we show the cumulative dose during adaptive treatment, which highlights the development of resistance predicted by both Eq.~\eqref{Eq:PressleyGaussianMoments} and Eq.~\eqref{Eq:PressleyTruncationMoments}.  
This highlights that population diversity may accelerate the development of treatment resistance as has been observed in many experimental \citep{Bodi2017,McGranahan2017,Sottoriva2013,Marine2020,Craig2019} and theoretical \citep{almeida2019evolution,ardavseva2020evolutionary,lorenzi2015dissecting,villa2021evolutionary,Kohn-Luque2023,Cassidy2018,Lavi2013,Greene2014,Nichol2016} studies. We emphasize that the parameters used in this simulation were not obtained by fitting the model to data, but rather chosen to qualitatively illustrate the influence of including population heterogeneity in a model of adaptive evolution of treatment resistance.

\begin{figure}[!ht]
\centering 
\includegraphics[scale=0.93, trim= 5 5 5 5,clip]{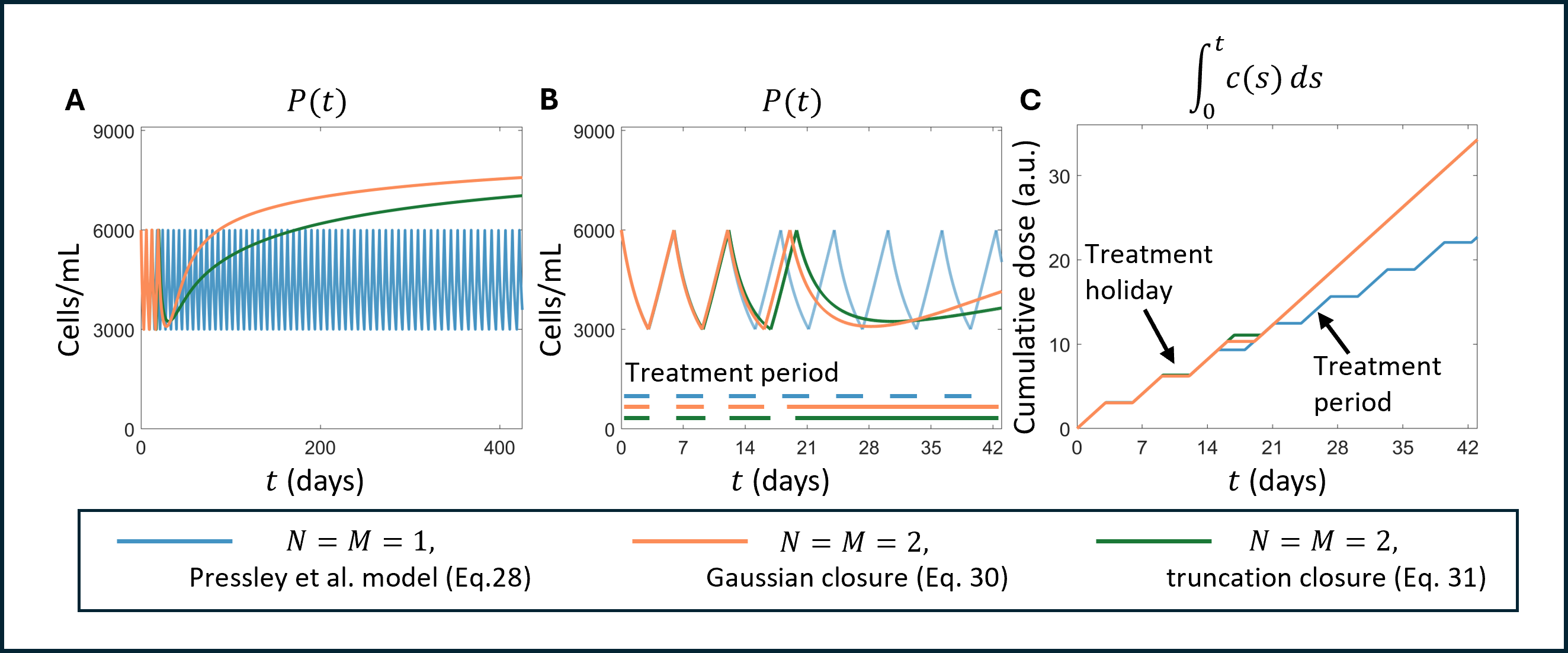}
\caption{
Population heterogeneity drives treatment resistance in the adaptive dynamics model implemented by \citet{Pressley2021}. Panels A and B show the dynamics of the simulated tumour population with and without heterogeneity. The horizontal lines in Panel B show when treatment is applied. Panel C shows the cumulative dose administered during the adaptive therapy treatment for each simulation. In all cases, the $N=M=1$ approximation, corresponding to the Pressley model in Eq.~\eqref{Eq:PressleyGenericSystem}, is shown in blue while the orange and green curves show the $N = M = 2$ approximation models with Gaussian and truncation closures, i.e. the models given by Eq.~\eqref{Eq:PressleyGaussianMoments} and Eq.~\eqref{Eq:PressleyTruncationMoments} respectively. Other than the adaptation speed, $\alpha = 0.00001$, and diffusion coefficient $\beta = 0.00002$, all simulations used the parameter values given in \citet{Pressley2021}.  }  

\label{Fig:PressleyModelExample}
\end{figure}

 \section{Discussion}\label{Sec:Discussion}
 
Mathematical models have become increasingly important in our understanding of the mechanisms driving cancer evolutionary dynamics and the role of intratumour phenotypic heterogeneity. 
Phenotype-structured PDE models describe the temporal dynamics of both the tumour size and phenotypic composition.
In the context of adaptive dynamics in continuously structured phenotype space, these PDE models carry a distinct advantage in that individual model terms are directly biologically interpretable. 

\paragraph{Benefits of reducing a phenotype-structured PDE to a system of ODEs.}
While phenotype-structured PDE models are well-poised to interpret biological data, they carry, by their very nature, challenges that do not apply to the ODE models that are routine in the interpretation of data. For example, for a mathematical model to be uniquely parameterised it must be \textit{identifiable}~\citep{Bellman.1970,Raue:2009,Cassidy2023a}. Outside of recent work \cite{browning2024structural,Boiger2016,Stapor2018a}, most computational tools for parameter identifiability analysis are developed for ODE models~\citep{Bellman.1970,Raue:2009}. Therefore, assessing parameter identifiability in PDE models often requires an ODE surrogate such as an equivalent or approximate system of moment equations \cite{Browning.2020}. 
Parameter identifiability is particularly pertinent in the case of phenotype-structured PDE models, as determining the phenotypic distribution of a tumour sample is experimentally demanding \citep{brestoff2022contemporary,Pigliucci2010}.
Furthermore, numerical methods for phenotype-structured PDEs are not widely implemented in existing scientific software \citep{Carpenter2017}. Consequently, fitting these PDE models to experimental data often requires the development of problem specific software. The resulting numerical methods are typically more computationally expensive than for ODE models. As parameter estimation typically involves many simulations of the mathematical model which amplifies the increased computational cost of solving the PDE. Therefore, the ability to reduce a phenotype-structured PDE to a system of ODEs while maintaining the ability to characterise the tumour composition allows for significant computational efficiency.

\paragraph{Summary and novelty of our model-reduction procedure.}
Accordingly, we proposed a generalised method to reduce a phenotype-structured PDE model of cancer adaptive dynamics to a system of ODEs for the moments of the phenotypic distribution, up to an arbitrary number of moments. This reduction allows modellers to use existing technical tools for ODE models while maintaining the biological relevance and the interpretability of the phenotype-structured PDE. 
The model-reduction procedure we propose relies on the use of the moment generating function of the phenotypic distribution, a Taylor series expansion of the phenotypic drift and proliferation rate functions, and truncation closure. Our work extends the analysis of \citet{almeida2019evolution} and \citet{chisholm2016evolutionary}, to a more biologically relevant phenotypic domain and a model without any \textit{a priori} assumptions on the shape of the distribution or the dependency of the phenotypic drift and proliferation rate on the phenotypic trait.
Our work also extends the analysis of~\citet{Dieckmann1996} that ties a stochastic model of mutation-selection dynamics to the adaptive dynamics models in Section~\ref{Sec:PressleyExample}. Here, we have used both \textit{truncation} and \textit{Gaussian} closures for the resulting system of ODEs describing the moments of the phenotypic distribution $\phat$. However, other approaches are possible and evaluating these alternative closures is an obvious area for future research.

\paragraph{Strengths and limitations.} 
Our model reduction procedure is independent of both the shape of the phenotypic distribution and the functional form of the terms that characterise adaptation.
This generality and flexibility lends our analysis suitable for adaptation in a wide range of contexts, both within mathematical oncology and more broadly. We expect many of the advantages conferred by the reduced model to become even more pertinent in high-dimensional phenotype spaces, particularly in the context of mechanistic interpretation of correspondingly high-dimensional multi-omics data. The necessity to impose a system closure yields an approximation of the underlying dynamics. However, we highlight that the presented approach can be applied up to an arbitrary order. The question of how many moments a problem requires, the closure to apply, or the effect of closure on parameter identifiability, remains open even in fields where moment closures have a long history \cite{smith2007quadrature,Kuehn:2016uf,ghusinga2017approximate,Schnoerr.2017wbb,Browning.2020,wagner2022quasi}. We expect unimodal phenotypic distributions to be well characterised by lower-order moments. For high-dimensional problems, the question of closure type and order is likely determined by computational cost. Ultimately, we provide a general and flexible framework for describing adaptation in a continuously-distributed phenotype space whilst retaining the computational and analytical advantages of ODE-based approaches. 

\paragraph{Consequences and perspectives in cancer adaptive therapy.}
Our work relaxes the assumptions of near-linear growth rates or vanishing variance underlying the canonical equation of adaptive dynamics \citep{Dieckmann1996} by explicitly linking the higher-order moments of the population distribution in phenotype space with the resulting impacts on population growth and adaptation. 
To illustrate the possible effects of including population heterogeneity, we considered the model of adaptive dynamics in response to adaptive therapy by \citet{Pressley2021}. 
We show that including phenotypic heterogeneity in the \citet{Pressley2021} model can lead to failure of an adaptive therapy strategy that would otherwise result in long-term tumour control. While this result is unsurprising \citep{Bodi2017}, it illustrates how phenotypic heterogeneity can drive treatment resistance \citep{Aguade-Gorgorio2018,Hanahan2022}. Consequently, our results illustrate a simple way to introduce population heterogeneity in ODE models investigating the development of treatment resistance, and it's importance in the emergent model dynamics

In this context, our work is directly relevant to recent multi-omics-level experiments characterising cellular phenotypes. However, these high dimensional data sets are challenging to interpret and thus numerous dimensionality reduction methods have been proposed~\cite{Oshternian2024,Burkhardt2022}. Amongst these are phenotype classification methods that summarise these data sets with a selected number of moments~\cite{Vogelstein2021,tang2010}. As such, moment-based descriptions are increasingly considered to describe phenotypic states and our modelling therefore offers a direct link with these emerging clinically relevant data sets. For example, bulk and single-cell sequencing have both identified continuous phenotypic adaptation in response to treatments in patient-derived xenografts carrying the BRAFV600E mutation~\cite{Xue2017}.
\citet{Kohn-Luque2023} have shown how to use \textit{in vitro} dose-response experiments to parameterize mathematical models of phenotypically distinct sub-populations through the \textit{PhenoPop} method, corresponding to a discrete phenotype distribution. The resulting models capture the sub-population dynamics but assume that each sub-population is homogeneous. Our results complement this approach by allowing for the \textit{PhenoPop} method to also describe the population heterogeneity. 

Consequently, the framework derived in this work could facilitate both the development of mathematical models and the calibration of these models by multi-omics data to understand how phenotypic heterogeneity drives the evolution of treatment resistance to targeted therapies. 

\section*{Code availability}

The Matlab code underlying the simulations in Section~\ref{sec:example:new} (which allows to test the approximation for custom $f$ and $V$ for various $N$ and $M$) is available at \texttt{https://github.com/ChiaraVilla/Villa2025Reducing}, while the code underlying the simulations in Section~\ref{Sec:PressleyExample} is available at   \texttt{https://github.com/ttcassid/\allowbreak /Phenotype$\_$Continuous$\_$Adaptation}.

\section*{Acknowledgments}
This work was partially supported by a Heilbronn Institute for Mathematical Research Small Maths Grant to TC. APB thanks the Mathematical Institute, Oxford for a Hooke Research Fellowship. ALJ thanks the London Mathematical Society. SH was funded by Wenner-Gren Stiftelserna/the Wenner-Gren Foundations (WGF2022-0044) and the Kjell och M{\"a}rta Beijer Foundation. This project was partially supported by the European Union's Horizon 2020 research and innovation programme under the Marie Sk\l{}odowska-Curie grant agreement No 945298-ParisRegionFP. CV is a Fellow of the Paris Region Fellowship Programme, supported by the Paris Region.

\section*{Compliance with Ethical Standards}
No potential conflicts of interest to declare.

\appendix

\section{Proof of Proposition~\ref{lemma:m0}}\label{app:proof:prop}

    Integrating \eqref{Eq:StructuredPDE} with respect to $x$ over $\Omega$, and interchanging integration and differentiation, gives
\begin{equation*}
\TimeDeriv P(t) = \int_\Omega \partial_x [ \beta \partial_x p(t,x) - V(t,x)p(t,x)] \d x +  \int_\Omega \left( f(t,x)- \frac{P(t)}{\kappa} \right) p(t,x) \d x ,
\end{equation*}
where we have also used definition~\eqref{Eq:PopulationSize}. Applying the boundary condition~\eqref{Eq:BCs}, and again using~\eqref{Eq:PopulationSize}, immediately gives Eq.~\eqref{Eq:PopulationSizeODE}. The
initial condition~\eqref{ic:m0} can be obtained by integrating the initial condition~\eqref{Eq:ICs} directly. The strict positivity of $P(0)$ follows directly from the assumption imposed on $p^0(x)$ in~\eqref{Eq:ICs}. 
Under assumption~\eqref{ass:fv:bounded}, from~\eqref{Eq:PopulationSizeODE} the following inequality holds
\begin{align*}
    \TimeDeriv P(t)&\leq f_\text{M}\int_\Omega p(t,x) \d x - \frac{P^2(t)}{\kappa} \leq \left( f_\text{M} - \frac{P(t)}{\kappa} \right) P(t)
\end{align*}
which, setting $\overline{P}:=  \max (P(0), f_\text{M}\kappa)$, gives the upper bound in~\eqref{m0:bound}. Similarly, we have
\begin{align*}
    \TimeDeriv P(t)&\geq f_\text{m}\int_\Omega p(t,x) \d x - \frac{P^2(t)}{\kappa} \geq \left( f_\text{m} - \frac{P(t)}{\kappa} \right) P(t). 
\end{align*}
We note that the lower bound for $\TimeDeriv P(t)$ is a scalar logisitic differential equation. Thus, Gronwall's inequality immediately yields the strict positivity of $P(t)$ for all $t>0$. \hfill \qed

\section{Full system under truncation closure}
\setcounter{equation}{0}\renewcommand\theequation{A\arabic{equation}}

System~\eqref{Eq:TruncatedGenericMomentODE:open} under truncation closure~\eqref{closure:truncation} can be rewritten as
\begin{equation}\label{Eq:TruncatedGenericMomentODE}
\left. 
\begin{aligned} 
\TimeDeriv  P(t) & =  P(t) \displaystyle \sum_{n=0}^{M} f_n(t) \left[ \displaystyle \sum_{i=\max (0,n-N)}^n  (-1)^{i}  \binom{n}{i} (m_1(t))^{i} m_{n-i}(t) \right] - \frac{P^2(t)}{\kappa} , \\
\TimeDeriv m_1(t) & =  \displaystyle \sum_{n=0}^{M} \left(V_n(t)  - m_1(t)f_n(t) \right) \left[ \displaystyle \sum_{i=\max (0,n-N)}^n (-1)^{i} \binom{n}{i} (m_1(t))^{i}  m_{n-i}(t) \right]  \\
& {} \quad + \displaystyle \sum_{n=0}^{M}  f_n(t) \left[  \displaystyle \sum_{i=\max (0,n+1-N)}^n  (-1)^{i} \binom{n}{i} (m_1(t))^{i} m_{n+1-i}(t) \right] , \\
\TimeDeriv m_k(t) & = - m_k(t) \displaystyle \sum_{n=0}^{M} f_n(t)  \displaystyle \left[ \sum_{i=\max (0,n-N)}^n  (-1)^{i}  \binom{n}{i} (m_1(t))^{i} m_{n-i}(t) \right]   + \beta  k(k-1) m_{k-2}(t)  \\
    &   \quad + k \displaystyle \sum_{n=0}^{M} V_n(t)  \left[ \displaystyle \sum_{i=\max (0,n+(k-1)-N)}^n  (-1)^{i} \binom{n}{i} (m_1(t))^{i} m_{n+(k-1)-i}(t)  \right]    \\
    & \quad  + \displaystyle \sum_{n=0}^{M} f_n(t)  \left[  \displaystyle \sum_{i=\max (0,n+k-N)}^n  (-1)^{i} \binom{n}{i} (m_1(t))^{i} m_{n+k-i}(t) \right] , \quad 2 \leq k \leq N .   
\end{aligned}
\right \}
\end{equation}
\section{Supplementary figures}
\setcounter{figure}{0}\renewcommand\thefigure{C\arabic{figure}}

\begin{figure}[htb!]
    \centering
    \includegraphics[width=1\linewidth]{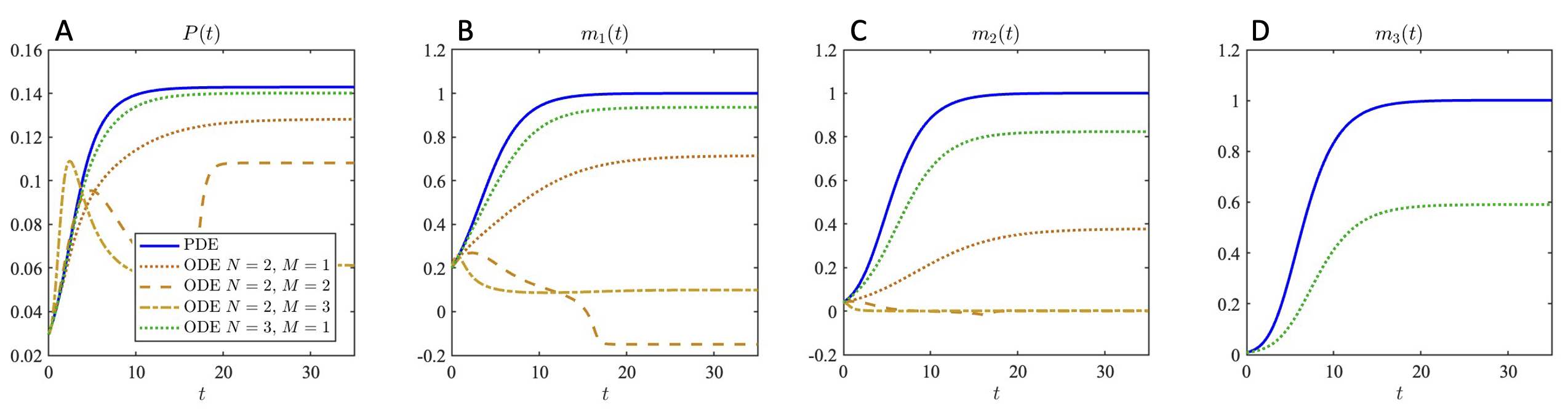}
    \caption{Comparison between the numerical solution of PDE~\eqref{Eq:StructuredPDE}, complemented with Eq.~\eqref{Eq:PopulationSize} and boundary conditions~\eqref{Eq:BCs}, and that of the ODE system~\eqref{Eq:TruncatedGenericMomentODE:open} under truncation closure~\eqref{closure:truncation}, for $f$ and $V$ defined in~\eqref{ex:fV:adhoc} ($\omega=2$) and initial condition~\eqref{ic:truncatednormal}. The moments corresponding to the phenotypic density function $p(t,x)$ obtained solving the PDE are plot with a blue solid line: the populations size $P(t)$ (panel A), the first moment $m_1(t)$ (B), the second moment $m_2(t)$ (C) and the third moment $m_3(t)$ (D). The moments obtained from the ODE system are also shown in each panel under different choices of $N$ and $M$: $N=2$ and $M=1$ (brown dotted line), $N=2$ and $M=2$ (brown dashed line), $N=2$ and $M=3$ (brown dash-dotted line), $N=3$ and $M=1$ (green dotted line), $N=3$ and $M=1$ (green dashed line). The results correspond to the parameter set: $f_{max}=2$, $k_m=0.4$, $V_{max}=0.5$, $\omega=2$, $\kappa=0.1$, $\beta=10^{-4}$, $\bar{x}_0=0.2$, $\sigma_0=0.02$, $l=0$, $L=1.2$. The numerical schemes rely on a first order forward difference approximation of the time derivatives, second order central difference for the diffusion term and first order upwind for the advection term. }  
    \label{fig:ex2:truncation}
\end{figure}
\begin{figure}[htb!]
    \centering
    \includegraphics[width=1\linewidth]{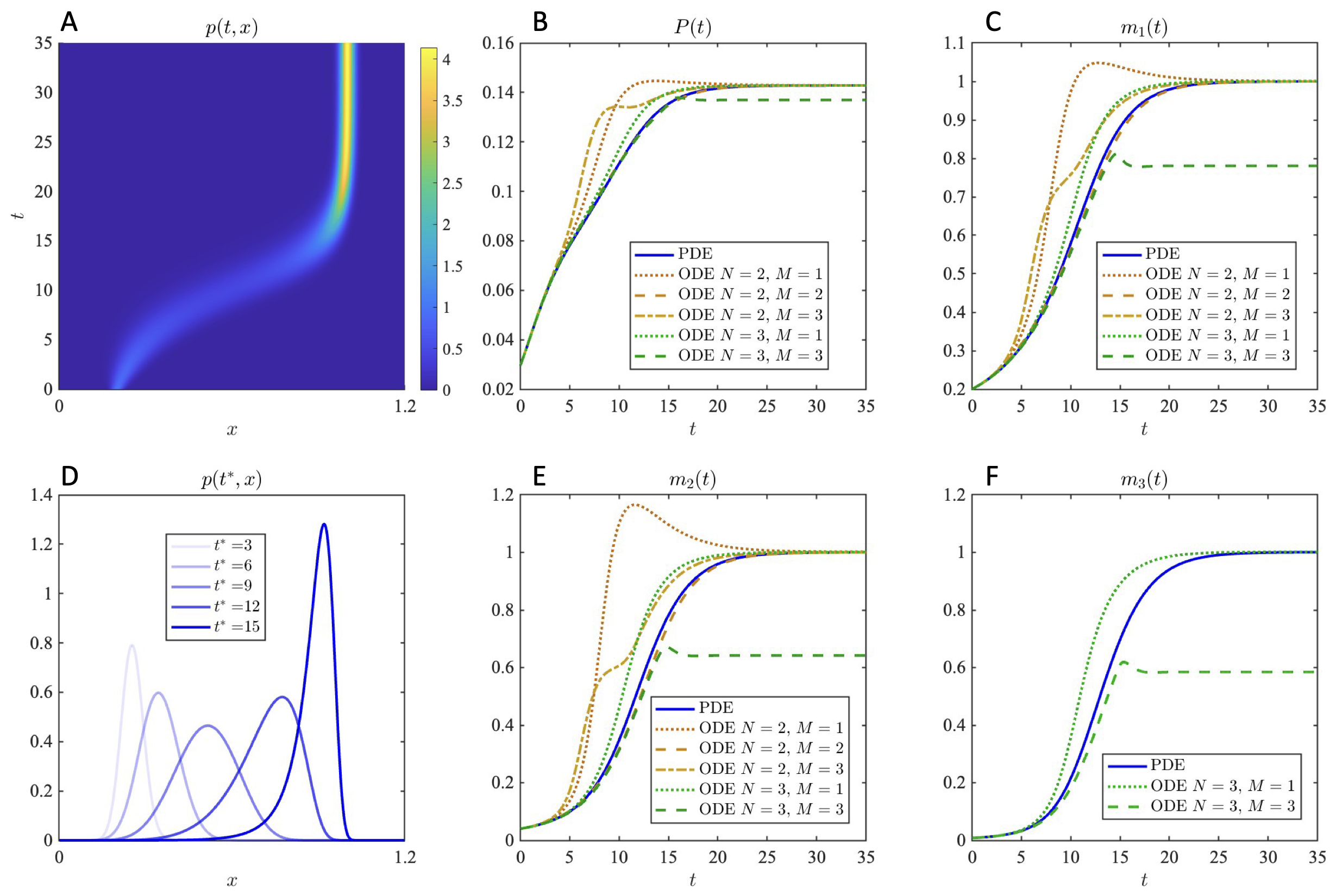}
     \caption{Comparison between the numerical solution of PDE~\eqref{Eq:StructuredPDE}, complemented with Eq.~\eqref{Eq:PopulationSize} and boundary conditions~\eqref{Eq:BCs}, and that of the ODE system~\eqref{Eq:TruncatedGenericMomentODE:open} under Gaussian closure~\eqref{closure:Gaussian}, for $f$ and $V$ defined in~\eqref{ex:fV:adhoc} ($\omega=2$) and initial condition~\eqref{ic:truncatednormal}. The phenotypic density function $p(t,x)$ obtained solving the PDE is shown in the first column for $t\in[0,35]$ (panel A) and at selected times $t^*=3,6,9,12,15$ (D). The corresponding moments are plot with a blue solid line in the remaining panels: the populations size $P(t)$ (B), the first moment $m_1(t)$ (C), the second moment $m_2(t)$ (E) and the third moment $m_3(t)$ (F). The moments obtained from the ODE system are also shown in each panel under different choices of $N$ and $M$: $N=2$ and $M=1$ (brown dotted line), $N=2$ and $M=2$ (brown dashed line), $N=2$ and $M=3$ (brown dash-dotted line), $N=3$ and $M=1$ (green dotted line), $N=3$ and $M=1$ (green dashed line). The results correspond to the parameter set: $f_{max}=2$, $k_m=0.4$, $V_{max}=0.5$, $\omega=2$, $\kappa=0.1$, $\beta=10^{-4}$, $\bar{x}_0=0.2$, $\sigma_0=0.02$, $l=0$, $L=1.2$. The numerical schemes rely on a first order forward difference approximation of the time derivatives, second order central difference for the diffusion term and first order upwind for the advection term.   }
    \label{fig:ex2:om2}
\end{figure}

 \clearpage


\begin{thebibliography}{}

\bibitem[Aguad{\'e}-Gorgori{\'o} and Sol{\'e}, 2018]{Aguade-Gorgorio2018}
Aguad{\'e}-Gorgori{\'o}, G. and Sol{\'e}, R. (2018).
\newblock Adaptive dynamics of unstable cancer populations: The canonical equation.
\newblock {\em Evol. Appl.}, 11(8):1283--1292.

\bibitem[Almeida et~al., 2019]{almeida2019evolution}
Almeida, L., Bagnerini, P., Fabrini, G., Hughes, B.~D., and Lorenzi, T. (2019).
\newblock Evolution of cancer cell populations under cytotoxic therapy and treatment optimisation: insight from a phenotype-structured model.
\newblock {\em ESAIM: Math. Model Num.}, 53(4):1157--1190.

\bibitem[Almeida et~al., 2024]{almeida2024evolutionary}
Almeida, L., Denis, J.~A., Ferrand, N., Lorenzi, T., Prunet, A., Sabbah, M., and Villa, C. (2024).
\newblock Evolutionary dynamics of glucose-deprived cancer cells: insights from experimentally informed mathematical modelling.
\newblock {\em J. R. Soc. Interface}, 21(210):20230587.

\bibitem[Altrock et~al., 2015]{Altrock2015}
Altrock, P.~M., Liu, L.~L., and Michor, F. (2015).
\newblock {The mathematics of cancer: integrating quantitative models}.
\newblock {\em Nat. Rev. Cancer}, 15(12):730--745.

\bibitem[Anderson et~al., 2006]{Anderson2006}
Anderson, A.~R., Weaver, A.~M., Cummings, P.~T., and Quaranta, V. (2006).
\newblock {Tumor Morphology and Phenotypic Evolution Driven by Selective Pressure from the Microenvironment}.
\newblock {\em Cell}, 127(5):905--915.

\bibitem[Arda{\v{s}}eva et~al., 2020]{ardavseva2020evolutionary}
Arda{\v{s}}eva, A., Gatenby, R.~A., Anderson, A.~R., Byrne, H.~M., Maini, P.~K., and Lorenzi, T. (2020).
\newblock Evolutionary dynamics of competing phenotype-structured populations in periodically fluctuating environments.
\newblock {\em J. Math. Biol.}, 80:775--807.

\bibitem[Bell and Gilan, 2020]{Bell2020}
Bell, C.~C. and Gilan, O. (2020).
\newblock {Principles and mechanisms of non-genetic resistance in cancer}.
\newblock {\em Br. J. Cancer}, 122(4):465--472.

\bibitem[Bellman and \r{A}str\"{o}m, 1970]{Bellman.1970}
Bellman, R. and \r{A}str\"{o}m, K. (1970).
\newblock {On structural identifiability}.
\newblock {\em Math. Biosci.}, 7(3-4):329--339.

\bibitem[B{\'{o}}di et~al., 2017]{Bodi2017}
B{\'{o}}di, Z., Farkas, Z., Nevozhay, D., Kalapis, D., L{\'{a}}z{\'{a}}r, V., Cs{\"{o}}rg{\"{o}}, B., Nyerges, {\'{A}}., Szamecz, B., Fekete, G., Papp, B., Ara{\'{u}}jo, H., Oliveira, J.~L., Moura, G., Santos, M. A.~S., {Sz{\'{e}}kely Jr}, T., Bal{\'{a}}zsi, G., and P{\'{a}}l, C. (2017).
\newblock {Phenotypic heterogeneity promotes adaptive evolution}.
\newblock {\em PLOS Biol.}, 15(5):e2000644.

\bibitem[Boiger et~al., 2016]{Boiger2016}
Boiger, R., Hasenauer, J., Hro{\ss}, S., and Kaltenbacher, B. (2016).
\newblock {Integration based profile likelihood calculation for PDE constrained parameter estimation problems}.
\newblock {\em Inverse Probl.}, 32(12):125009.

\bibitem[Brestoff and Frater, 2022]{brestoff2022contemporary}
Brestoff, J.~R. and Frater, J.~L. (2022).
\newblock Contemporary challenges in clinical flow cytometry: small samples, big data, little time.
\newblock {\em JALM}, 7(4):931--944.

\bibitem[Browning et~al., 2024]{browning2024structural}
Browning, A.~P., Ta{\c{s}}c{\u{a}}, M., Falc{\'o}, C., and Baker, R.~E. (2024).
\newblock Structural identifiability analysis of linear reaction--advection--diffusion processes in mathematical biology.
\newblock {\em P. R. Soc. A}, 480(2286):20230911.

\bibitem[Browning et~al., 2020]{Browning.2020}
Browning, A.~P., Warne, D.~J., Burrage, K., Baker, R.~E., and Simpson, M.~J. (2020).
\newblock {Identifiability analysis for stochastic differential equation models in systems biology}.
\newblock {\em J. R. Soc. Interface}, 17:20200652.

\bibitem[Burkhardt et~al., 2022]{Burkhardt2022}
Burkhardt, D.~B., {San Juan}, B.~P., Lock, J.~G., Krishnaswamy, S., and Chaffer, C.~L. (2022).
\newblock {Mapping phenotypic plasticity upon the cancer cell state landscape using manifold learning}.
\newblock {\em Cancer Discov.}, 12(8):1847--1859.

\bibitem[Carpenter et~al., 2017]{Carpenter2017}
Carpenter, B., Gelman, A., Hoffman, M.~D., Lee, D., Goodrich, B., Betancourt, M., Brubaker, M., Guo, J., Li, P., and Riddell, A. (2017).
\newblock {Stan: a probabilistic programming language}.
\newblock {\em J. Stat. Softw.}, 76(1).

\bibitem[Cassidy, 2023]{Cassidy2023a}
Cassidy, T. (2023).
\newblock {A continuation technique for maximum likelihood estimators in biological models}.
\newblock {\em Bull. Math. Biol.}, 85(10):90.

\bibitem[Cassidy and Craig, 2019]{Cassidy2019}
Cassidy, T. and Craig, M. (2019).
\newblock {Determinants of combination GM-CSF immunotherapy and oncolytic virotherapy success identified through in silico treatment personalization}.
\newblock {\em PLOS Comput. Biol.}, 15(11):e1007495.

\bibitem[Cassidy and Humphries, 2019]{Cassidy2018}
Cassidy, T. and Humphries, A.~R. (2019).
\newblock {A mathematical model of viral oncology as an immuno-oncology instigator}.
\newblock {\em Math. Med. Biol. A J. IMA}, 37(1):117--151.

\bibitem[Cassidy et~al., 2021]{Cassidy2021}
Cassidy, T., Nichol, D., Robertson-Tessi, M., Craig, M., and Anderson, A. R.~A. (2021).
\newblock {The role of memory in non-genetic inheritance and its impact on cancer treatment resistance}.
\newblock {\em PLOS Comput. Biol.}, 17(8):e1009348.

\bibitem[Celora et~al., 2022]{celora2022dna}
Celora, G.~L., Bader, S.~B., Hammond, E.~M., Maini, P.~K., Pitt-Francis, J.~M., and Byrne, H.~M. (2022).
\newblock A dna-structured mathematical model of cell-cycle progression in cyclic hypoxia.
\newblock {\em J. Theor. Biol.}, 545:111104.

\bibitem[Celora et~al., 2021]{celora2021phenotypic}
Celora, G.~L., Byrne, H.~M., Zois, C.~E., and Kevrekidis, P.~G. (2021).
\newblock Phenotypic variation modulates the growth dynamics and response to radiotherapy of solid tumours under normoxia and hypoxia.
\newblock {\em J. Theor. Biol.}, 527:110792.

\bibitem[Champagnat et~al., 2002]{champagnat2002canonical}
Champagnat, N., Ferri{\`e}re, R., and Ben~Arous, G. (2002).
\newblock The canonical equation of adaptive dynamics: a mathematical view.
\newblock {\em Selection}, 2(1-2):73--83.

\bibitem[Champagnat et~al., 2006]{champagnat2006unifying}
Champagnat, N., Ferri{\`e}re, R., and M{\'e}l{\'e}ard, S. (2006).
\newblock Unifying evolutionary dynamics: from individual stochastic processes to macroscopic models.
\newblock {\em Theor. Popul. Biol.}, 69(3):297--321.

\bibitem[Chen et~al., 2023]{Chen2023}
Chen, J.-Y., Hug, C., Reyes, J., Tian, C., Gerosa, L., Fr{\"{o}}hlich, F., Ponsioen, B., Snippert, H.~J., Spencer, S.~L., Jambhekar, A., Sorger, P.~K., and Lahav, G. (2023).
\newblock {Multi-range ERK responses shape the proliferative trajectory of single cells following oncogene induction}.
\newblock {\em Cell Rep.}, 42(3):112252.

\bibitem[Chisholm et~al., 2016a]{chisholm2016cell}
Chisholm, R.~H., Lorenzi, T., and Clairambault, J. (2016a).
\newblock Cell population heterogeneity and evolution towards drug resistance in cancer: biological and mathematical assessment, theoretical treatment optimisation.
\newblock {\em Biochim. Biophys. Acta}, 1860(11):2627--2645.

\bibitem[Chisholm et~al., 2016b]{chisholm2016evolutionary}
Chisholm, R.~H., Lorenzi, T., Desvillettes, L., and Hughes, B.~D. (2016b).
\newblock Evolutionary dynamics of phenotype-structured populations: from individual-level mechanisms to population-level consequences.
\newblock {\em Z Angew. Math. Me.}, 67:1--34.

\bibitem[Chisholm et~al., 2016c]{chisholm2016effects}
Chisholm, R.~H., Lorenzi, T., and Lorz, A. (2016c).
\newblock Effects of an advection term in nonlocal Lotka–Volterra equations.
\newblock {\em Commun. Math. Sci.}, 14(4):1181–1188.

\bibitem[Chisholm et~al., 2015]{Chisholm2015}
Chisholm, R.~H., Lorenzi, T., Lorz, A., Larsen, A.~K., {De Almeida}, L.~N., Escargueil, A., and Clairambault, J. (2015).
\newblock {Emergence of Drug Tolerance in Cancer Cell Populations: An Evolutionary Outcome of Selection, Nongenetic Instability, and Stress-Induced Adaptation}.
\newblock {\em Cancer Res.}, 75(6):930--939.

\bibitem[Cho and Levy, 2018a]{cho2018modeling}
Cho, H. and Levy, D. (2018a).
\newblock Modeling continuous levels of resistance to multidrug therapy in cancer.
\newblock {\em Appl. Math. Model.}, 64:733--751.

\bibitem[Cho and Levy, 2018b]{Cho2018}
Cho, H. and Levy, D. (2018b).
\newblock {Modeling the chemotherapy-induced selection of drug-resistant traits during tumor growth}.
\newblock {\em J. Theor. Biol.}, 436:120--134.

\bibitem[Clairambault and Pouchol, 2019]{clairambault2019survey}
Clairambault, J. and Pouchol, C. (2019).
\newblock A survey of adaptive cell population dynamics models of emergence of drug resistance in cancer, and open questions about evolution and cancer.
\newblock {\em Biomath,}, 8(1):23.

\bibitem[Coggan and Page, 2022]{Coggan2022}
Coggan, H. and Page, K.~M. (2022).
\newblock {The role of evolutionary game theory in spatial and non-spatial models of the survival of cooperation in cancer: a review}.
\newblock {\em J. R. Soc. Interface}, 19(193).

\bibitem[Craig et~al., 2019]{Craig2019}
Craig, M., Kaveh, K., Woosley, A., Brown, A.~S., Goldman, D., Eton, E., Mehta, R.~M., Dhawan, A., Arai, K., Rahman, M.~M., Chen, S., Nowak, M.~A., and Goldman, A. (2019).
\newblock {Cooperative adaptation to therapy (CAT) confers resistance in heterogeneous non-small cell lung cancer}.
\newblock {\em PLOS Comput. Biol.}, 15(8):e1007278.

\bibitem[Curto and di~Dio, 2023]{Curto2023}
Curto, R.~E. and di~Dio, P.~J. (2023).
\newblock {Time-dependent moments from the heat equation and a transport equation}.
\newblock {\em Int. Math. Res. Not.}, 2023(17):14955--14990.

\bibitem[Dieckmann and Law, 1996]{Dieckmann1996}
Dieckmann, U. and Law, R. (1996).
\newblock {The dynamical theory of coevolution: a derivation from stochastic ecological processes}.
\newblock {\em J. Math. Biol.}, 34(5-6):579--612.

\bibitem[D{\"u}ll et~al., 2021]{dull2021spaces}
D{\"u}ll, C., Gwiazda, P., Marciniak-Czochra, A., and Skrzeczkowski, J. (2021).
\newblock {Spaces of measures and their applications to structured population models}.
\newblock {\em Cambridge University Press}, Vol. 36.

\bibitem[Engblom, 2006]{Engblom:2006}
Engblom, S. (2006).
\newblock {Computing the moments of high dimensional solutions of the master equation}.
\newblock {\em Appl. Math. Comput.}, 180(2):498--515.

\bibitem[Fan et~al., 2016]{Fan:2016}
Fan, S., Geissmann, Q., Lakatos, E., Lukauskas, S., Ale, A., Babtie, A.~C., Kirk, P. D.~W., and Stumpf, M. P.~H. (2016).
\newblock {MEANS: python package for Moment Expansion Approximation, iNference and Simulation}.
\newblock {\em Bioinformatics}, 32(18):2863--2865.

\bibitem[Ghusinga et~al., 2017]{ghusinga2017approximate}
Ghusinga, K.~R., Soltani, M., Lamperski, A., Dhople, S.~V., and Singh, A. (2017).
\newblock Approximate moment dynamics for polynomial and trigonometric stochastic systems.
\newblock In {\em 2017 IEEE 56th Annual Conference on Decision and Control (CDC)}, pages 1864--1869. IEEE.

\bibitem[Gillespie, 2009]{gillespie2009moment}
Gillespie, C.~S. (2009).
\newblock Moment-closure approximations for mass-action models.
\newblock {\em IET Syst. Biol.}, 3(1):52--58.

\bibitem[Goldman et~al., 2015]{Goldman2015}
Goldman, A., Majumder, B., Dhawan, A., Ravi, S., Goldman, D., Kohandel, M., Majumder, P.~K., and Sengupta, S. (2015).
\newblock {Temporally sequenced anticancer drugs overcome adaptive resistance by targeting a vulnerable chemotherapy-induced phenotypic transition}.
\newblock {\em Nat. Commun.}, 6(1):6139.

\bibitem[Greene et~al., 2014]{Greene2014}
Greene, J., Lavi, O., Gottesman, M.~M., and Levy, D. (2014).
\newblock {The impact of cell density and mutations in a model of multidrug resistance in solid tumors}.
\newblock {\em Bull. Math. Biol.}, 76(3):627--653.

\bibitem[Gunnarsson et~al., 2020]{Gunnarsson2020}
Gunnarsson, E.~B., De, S., Leder, K., and Foo, J. (2020).
\newblock {Understanding the role of phenotypic switching in cancer drug resistance}.
\newblock {\em J. Theor. Biol.}, 490:110162.

\bibitem[Hanahan, 2022]{Hanahan2022}
Hanahan, D. (2022).
\newblock {Hallmarks of cancer: new dimensions}.
\newblock {\em Cancer Discov.}, 12(1):31--46.

\bibitem[Kadanoff, 2000]{Kadanoff.2000}
Kadanoff, L.~P. (2000).
\newblock {\em {Statistical Physics: Statics, Dynamics And Renormalization}}.
\newblock World Scientific, Singapore.

\bibitem[Kareva, 2022]{Kareva2022}
Kareva, I. (2022).
\newblock {Different costs of therapeutic resistance in cancer: Short- and long-term impact of population heterogeneity}.
\newblock {\em Math. Biosci.}, 352(June):108891.

\bibitem[Kavran et~al., 2022]{Kavran2022}
Kavran, A.~J., Stuart, S.~A., Hayashi, K.~R., Basken, J.~M., Brandhuber, B.~J., and Ahn, N.~G. (2022).
\newblock {Intermittent treatment of BRAFV600E melanoma cells delays resistance by adaptive resensitization to drug rechallenge}.
\newblock {\em Proc. Natl. Acad. Sci. U. S. A.}, 119(12).

\bibitem[Kaznatcheev et~al., 2019]{Kaznatcheev2019}
Kaznatcheev, A., Peacock, J., Basanta, D., Marusyk, A., and Scott, J.~G. (2019).
\newblock {Fibroblasts and alectinib switch the evolutionary games played by non-small cell lung cancer}.
\newblock {\em Nat. Ecol. Evol.}, 3(3):450--456.

\bibitem[K{\"{o}}hn-Luque et~al., 2023]{Kohn-Luque2023}
K{\"{o}}hn-Luque, A., Myklebust, E.~M., Tadele, D.~S., Giliberto, M., Schmiester, L., Noory, J., Harivel, E., Arsenteva, P., Mumenthaler, S.~M., Schjesvold, F., Task{\'{e}}n, K., Enserink, J.~M., Leder, K., Frigessi, A., and Foo, J. (2023).
\newblock {Phenotypic deconvolution in heterogeneous cancer cell populations using drug-screening data}.
\newblock {\em Cell Rep. Methods}, 3, 100417.

\bibitem[Kuehn, 2016]{Kuehn:2016uf}
Kuehn, C. (2016).
\newblock {Moment Closure---A Brief Review}.
\newblock In {\em Control of Self-Organizing Nonlinear Systems}, Sch\"{o}ll, E., Klapp, S., H\"{o}vel, P. (eds), pages 253--271. Springer.

\bibitem[Labrie et~al., 2022]{Labrie2022}
Labrie, M., Brugge, J.~S., Mills, G.~B., and Zervantonakis, I.~K. (2022).
\newblock {Therapy resistance: opportunities created by adaptive responses to targeted therapies in cancer}.
\newblock {\em Nat. Rev. Cancer}, 22(6):323--339.

\bibitem[Lavi et~al., 2013]{Lavi2013}
Lavi, O., Greene, J.~M., Levy, D., and Gottesman, M.~M. (2013).
\newblock {The role of cell density and intratumoral heterogeneity in multidrug resistance}.
\newblock {\em Cancer Res.}, 73(24):7168--7175.

\bibitem[Lee et~al., 2009]{lee2009moment}
Lee, C.~H., Kim, K.-H., and Kim, P. (2009).
\newblock A moment closure method for stochastic reaction networks.
\newblock {\em J. Chem. Phys.}, 130(13).

\bibitem[Lorenzi et~al., 2015]{lorenzi2015dissecting}
Lorenzi, T., Chisholm, R.~H., Desvillettes, L., and Hughes, B.~D. (2015).
\newblock Dissecting the dynamics of epigenetic changes in phenotype-structured populations exposed to fluctuating environments.
\newblock {\em J. Theor. Biol.}, 386:166--176.

\bibitem[Lorenzi et~al., 2020]{lorenzi2020discrete}
Lorenzi, T., Macfarlane, F.~R., and Villa, C. (2020).
\newblock Discrete and continuum models for the evolutionary and spatial dynamics of cancer: a very short introduction through two case studies.
\newblock In {\em Trends in Biomathematics: Modeling Cells, Flows, Epidemics, and the Environment}, pages 359--380. Springer.

\bibitem[Marine et~al., 2020]{Marine2020}
Marine, J.-C., Dawson, S.-J., and Dawson, M.~A. (2020).
\newblock {Non-genetic mechanisms of therapeutic resistance in cancer}.
\newblock {\em Nat. Rev. Cancer} 20.12:743-756.

\bibitem[Martinez et~al., 2021]{Martinez2021}
Martinez, V.~A., Laleh, N.~G., Salvioli, M., Thuijsman, F., Brown, J.~S., Cavill, R., Kather, J.~N., and Sta{\v{n}}kov{\'{a}}, K. (2021).
\newblock {Improving mathematical models of cancer by including resistance to therapy: a study in non-small cell lung cancer}.
\newblock {\em bioRxiv}, pages 1--27.

\bibitem[Marusyk et~al., 2020]{marusyk2020intratumor}
Marusyk, A., Janiszewska, M., and Polyak, K. (2020).
\newblock Intratumor heterogeneity: the rosetta stone of therapy resistance.
\newblock {\em Cancer cell}, 37(4):471--484.

\bibitem[McGranahan and Swanton, 2017]{McGranahan2017}
McGranahan, N. and Swanton, C. (2017).
\newblock {Clonal heterogeneity and tumor evolution: past, present, and the future}.
\newblock {\em Cell}, 168(4):613--628.

\bibitem[Mosier et~al., 2021]{mosier2021cancer}
Mosier, J.~A., Schwager, S.~C., Boyajian, D.~A., and Reinhart-King, C.~A. (2021).
\newblock Cancer cell metabolic plasticity in migration and metastasis.
\newblock {\em Clin. Exp. Metastasis}, 38(4):343--359.

\bibitem[Nichol et~al., 2016]{Nichol2016}
Nichol, D., Robertson-Tessi, M., Jeavons, P., and Anderson, A.~R. (2016).
\newblock {Stochasticity in the genotype-phenotype map: implications for the robustness and persistence of bet-hedging}.
\newblock {\em Genetics}, 204(4):1523--1539.

\bibitem[Oshternian et~al., 2024]{Oshternian2024}
Oshternian, S.~R., Loipfinger, S., Bhattacharya, A., and Fehrmann, R. S.~N. (2024).
\newblock {Exploring combinations of dimensionality reduction, transfer learning, and regularization methods for predicting binary phenotypes with transcriptomic data}.
\newblock {\em BMC Bioinformatics}, 25(1):167.

\bibitem[Perthame, 2006]{perthame2006transport}
Perthame, B. (2006).
\newblock Transport equations in biology.
\newblock {\em Springer Science $\&$ Business Media}, Birkh{\"a}user Verlag, Basel, Boston, Berlin. 

\bibitem[Perthame and Barles, 2008]{perthame2008dirac}
Perthame, B. and Barles, G. (2008).
\newblock Dirac concentrations in {L}otka-{V}olterra parabolic {PDE}s.
\newblock {\em Indiana U. Math. J.}, 57(7):3275--3301.

\bibitem[Pigliucci, 2010]{Pigliucci2010}
Pigliucci, M. (2010).
\newblock {Genotype-phenotype mapping and the end of the `genes as blueprint' metaphor}.
\newblock {\em Philos. Trans. R. Soc. B}, 365(1540):557--566.

\bibitem[Pressley et~al., 2021]{Pressley2021}
Pressley, M., Salvioli, M., Lewis, D.~B., Richards, C.~L., Brown, J.~S., and Sta{\v{n}}kov{\'{a}}, K. (2021).
\newblock {Evolutionary dynamics of treatment-induced resistance in cancer informs understanding of rapid evolution in natural systems}.
\newblock {\em Front. Ecol. Evol.}, 9(August):1--22.

\bibitem[Raue et~al., 2009]{Raue:2009}
Raue, A., Kreutz, C., Maiwald, T., Bachmann, J., Schilling, M., Klingm\"{u}ller, U., and Timmer, J. (2009).
\newblock {Structural and practical identifiability analysis of partially observed dynamical models by exploiting the profile likelihood}.
\newblock {\em Bioinformatics}, 25(15):1923--1929.

\bibitem[Schnoerr et~al., 2017]{Schnoerr.2017wbb}
Schnoerr, D., Sanguinetti, G., and Grima, R. (2017).
\newblock {Approximation and inference methods for stochastic biochemical kinetics---a tutorial review}.
\newblock {\em J. Phys. A}, 50(9):093001.

\bibitem[Shaffer et~al., 2017]{Shaffer2017}
Shaffer, S.~M., Dunagin, M.~C., Torborg, S.~R., Torre, E.~A., Emert, B., Krepler, C., Beqiri, M., Sproesser, K., Brafford, P.~A., Xiao, M., Eggan, E., Anastopoulos, I.~N., Vargas-Garcia, C.~A., Singh, A., Nathanson, K.~L., Herlyn, M., and Raj, A. (2017).
\newblock {Rare cell variability and drug-induced reprogramming as a mode of cancer drug resistance}.
\newblock {\em Nature}, 546(7658):431--435.

\bibitem[Sharma et~al., 2010]{Sharma2010}
Sharma, S.~V., Lee, D.~Y., Li, B., Quinlan, M.~P., Takahashi, F., Maheswaran, S., McDermott, U., Azizian, N., Zou, L., Fischbach, M.~A., Wong, K.~K., Brandstetter, K., Wittner, B., Ramaswamy, S., Classon, M., and Settleman, J. (2010).
\newblock {A chromatin-mediated reversible drug-tolerant state in cancer cell subpopulations}.
\newblock {\em Cell}, 141(1):69--80.

\bibitem[Shen and Clairambault, 2020]{shen2020cell}
Shen, S. and Clairambault, J. (2020).
\newblock Cell plasticity in cancer cell populations.
\newblock {\em F1000Research}, 9:F1000 Faculty Rev-635.

\bibitem[Shi et~al., 2023]{Shi2023}
Shi, Z.-D., Pang, K., Wu, Z.-X., Dong, Y., Hao, L., Qin, J.-X., Wang, W., Chen, Z.-S., and Han, C.-H. (2023).
\newblock {Tumor cell plasticity in targeted therapy-induced resistance: mechanisms and new strategies}.
\newblock {\em Signal Transduct. Target. Ther.}, 8(1):113.

\bibitem[Smalley et~al., 2019]{Smalley2019}
Smalley, I., Kim, E., Li, J., Spence, P., Wyatt, C.~J., Eroglu, Z., Sondak, V.~K., Messina, J.~L., Babacan, N.~A., Maria-Engler, S.~S., et~al. (2019).
\newblock Leveraging transcriptional dynamics to improve braf inhibitor responses in melanoma.
\newblock {\em EBioMedicine}, 48:178--190.

\bibitem[Smith et~al., 2007]{smith2007quadrature}
Smith, S., Fox, R., and Raman, V. (2007).
\newblock A quadrature closure for the reaction-source term in conditional-moment closure.
\newblock {\em Proc. Combust. Inst.}, 31(1):1675--1682.

\bibitem[Sottoriva et~al., 2013]{Sottoriva2013}
Sottoriva, A., Spiteri, I., Piccirillo, S. G.~M., Touloumis, A., Collins, V.~P., Marioni, J.~C., Curtis, C., Watts, C., and Tavar{\'{e}}, S. (2013).
\newblock {Intratumor heterogeneity in human glioblastoma reflects cancer evolutionary dynamics}.
\newblock {\em Proc. Natl. Acad. Sci. U. S. A.}, 110(10):4009--4014.

\bibitem[Stace et~al., 2020]{stace2020discrete}
Stace, R.~E., Stiehl, T., Chaplain, M.~A., Marciniak-Czochra, A., and Lorenzi, T. (2020).
\newblock Discrete and continuum phenotype-structured models for the evolution of cancer cell populations under chemotherapy.
\newblock {\em Math. Model. Nat. Phenom.}, 15:14.

\bibitem[Stapor et~al., 2018]{Stapor2018a}
Stapor, P., Weindl, D., Ballnus, B., Hug, S., Loos, C., Fiedler, A., Krause, S., Hro{\ss}, S., Fr{\"{o}}hlich, F., and Hasenauer, J. (2018).
\newblock {PESTO: Parameter EStimation TOolbox}.
\newblock {\em Bioinformatics}, 34(4):705--707.

\bibitem[Tang et~al., 2010]{tang2010}
Tang, K.~L., Li, T.~H., Xiong, W.~W., and Chen, K. (2010).
\newblock {Ovarian cancer classification based on dimensionality reduction for {S}{E}{L}{D}{I}-{T}{O}{F} data}.
\newblock {\em BMC Bioinformatics}, 11:109.

\bibitem[Tasdogan et~al., 2020]{tasdogan2020metabolic}
Tasdogan, A., Faubert, B., Ramesh, V., Ubellacker, J.~M., Shen, B., Solmonson, A., Murphy, M.~M., Gu, Z., Gu, W., Martin, M., et~al. (2020).
\newblock Metabolic heterogeneity confers differences in melanoma metastatic potential.
\newblock {\em Nature}, 577(7788):115--120.

\bibitem[Tirosh et~al., 2016]{Tirosh2016}
Tirosh, I., Izar, B., Prakadan, S.~M., Wadsworth, M.~H., Treacy, D., Trombetta, J.~J., Rotem, A., Rodman, C., Lian, C., Murphy, G., Fallahi-Sichani, M., Dutton-Regester, K., Lin, J.-R., Cohen, O., Shah, P., Lu, D., Genshaft, A.~S., Hughes, T.~K., Ziegler, C. G.~K., Kazer, S.~W., Gaillard, A., Kolb, K.~E., Villani, A.-C., Johannessen, C.~M., Andreev, A.~Y., {Van Allen}, E.~M., Bertagnolli, M., Sorger, P.~K., Sullivan, R.~J., Flaherty, K.~T., Frederick, D.~T., Jan{\'{e}}-Valbuena, J., Yoon, C.~H., Rozenblatt-Rosen, O., Shalek, A.~K., Regev, A., and Garraway, L.~A. (2016).
\newblock {Dissecting the multicellular ecosystem of metastatic melanoma by single-cell RNA-seq}.
\newblock {\em Science}, 352(6282):189--196.

\bibitem[Villa et~al., 2021]{villa2021evolutionary}
Villa, C., Chaplain, M.~A., and Lorenzi, T. (2021).
\newblock Evolutionary dynamics in vascularised tumours under chemotherapy: Mathematical modelling, asymptotic analysis and numerical simulations.
\newblock {\em Vietnam J. Math.}, 49:143--167.

\bibitem[Vincent and Brown, 2005]{vincent2005evolutionary}
Vincent, T.~L. and Brown, J.~S. (2005).
\newblock {\em Evolutionary Game Theory, Natural Selection, and Darwinian Dynamics}.
\newblock Cambridge University Press.

\bibitem[Vincent and Gatenby, 2008]{Vincent2008}
Vincent, T.~L. and Gatenby, R.~A. (2008).
\newblock {An evolutionary model for initiation, promotion, and progression in carcinogenesis.}
\newblock {\em Int. J. Oncol.}, 32(4):729--37.

\bibitem[Vogelstein et~al., 2021]{Vogelstein2021}
Vogelstein, J.~T., Bridgeford, E.~W., Tang, M., Zheng, D., Douville, C., Burns, R., and Maggioni, M. (2021).
\newblock {Supervised dimensionality reduction for big data}.
\newblock {\em Nat. Commun.}, 12(1):2872.

\bibitem[Wagner et~al., 2022]{wagner2022quasi}
Wagner, V., Castellaz, B., Oesting, M., and Radde, N. (2022).
\newblock {Quasi-entropy closure: a fast and reliable approach to close the moment equations of the chemical master equation}.
\newblock {\em Bioinformatics}, 38(18):4352--4359.

\bibitem[West et~al., 2018]{West2018}
West, J., Ma, Y., and Newton, P.~K. (2018).
\newblock {Capitalizing on competition: An evolutionary model of competitive release in metastatic castration resistant prostate cancer treatment}.
\newblock {\em J. Theor. Biol.}, 455:249--260.

\bibitem[Xue et~al., 2017]{Xue2017}
Xue, Y., Martelotto, L., Baslan, T., Vides, A., Solomon, M., Mai, T.~T., Chaudhary, N., Riely, G.~J., Li, B.~T., Scott, K., Cechhi, F., Stierner, U., Chadalavada, K., de~Stanchina, E., Schwartz, S., Hembrough, T., Nanjangud, G., Berger, M.~F., Nilsson, J., Lowe, S.~W., Reis-Filho, J.~S., Rosen, N., and Lito, P. (2017).
\newblock {An approach to suppress the evolution of resistance in BRAF$^{\textrm{V600E}}$-mutant cancer}.
\newblock {\em Nat. Med.}, 23(8):929--937.

\end{thebibliography}
\end{document}